\newtheorem{Theorem}{Theorem}[section]
\newtheorem{Theorem/Definition}{Theorem/Definition}[section]
\newtheorem{Proposition}{Proposition}[section]
\newtheorem{Lemma}{Lemma}[section]
\newtheorem{Corollary}{Corollary}[section]
\newcommand{\pd}{\partial}
\newcommand{\bC}{{\mathbb C}}
\newcommand{\bP}{{\mathbb P}}
\newcommand{\bZ}{{\mathbb Z}}
\newcommand{\cB}{{\mathcal B}}
\newcommand{\cC}{{\mathcal C}}
\newcommand{\cF}{{\mathcal F}}
\newcommand{\cH}{{\mathcal H}}
\newcommand{\half}{\frac{1}{2}}
\newcommand{\wA}{{\widehat A}}
\newcommand{\te}{{\tilde e}}
\newcommand{\tpd}{{\tilde{\pd}}}
\newcommand{\hf}{{\hat f}}
\newcommand{\hF}{{\hat F}}
\newcommand{\be}{\begin{equation}}
\newcommand{\ee}{\end{equation}}
\newcommand{\bea}{\begin{eqnarray}}
\newcommand{\ben}{\begin{eqnarray*}}
\newcommand{\een}{\end{eqnarray*}}
\newcommand{\eea}{\end{eqnarray}}
\DeclareMathOperator{\Res}{Res}
\DeclareMathOperator{\Pf}{Pf}
\definecolor{yellow}{rgb}{1,1,0}
\definecolor{orange}{rgb}{1,.7,0}
\definecolor{red}{rgb}{1,0,0}
\definecolor{green}{rgb}{0,1,1}
\definecolor{white}{rgb}{1,1,1}
\definecolor{A}{rgb}{.75,1,.75}
\theoremstyle{remark}
\newtheorem{Remark}{Remark}[section]
\begin{document}

\newtheorem{myDef}{Definition}
\newtheorem{thm}{Theorem}
\newtheorem{eqn}{equation}

\title[Kac-Schwarz Operators of Type $B$ and Quantum Spectral Curves]
{Kac-Schwarz Operators of Type $B$, Quantum Spectral Curves, and Spin Hurwitz Numbers}

\author{Ce Ji}
\address{School of Mathematical Sciences\\
Peking University\\Beijing, 100871, China}
\email{sms-jice@pku.edu.cn}

\author{Zhiyuan Wang}
\address{School of Mathematical Sciences\\
Peking University\\Beijing, 100871, China}
\email{zhiyuan19@math.pku.edu.cn}

\author{Chenglang Yang}
\address{Beijing International Center for Mathematical Research
 \& School of Mathematical Sciences\\
Peking University\\Beijing, 100871, China}
\email{yangcl@pku.edu.cn}

\begin{abstract}

Given a tau-function $\tau(\bm t)$ of the BKP hierarchy satisfying $\tau(0)=1$,
we discuss the relation between its BKP-affine coordinates on the isotropic Sato Grassmannian and its BKP-wave function.
Using this result,
we formulate a type of Kac-Schwarz operators for $\tau(\bm t)$
in terms of BKP-affine coordinates.
As an example,
we compute the affine coordinates of the BKP tau-function for spin single Hurwitz numbers
with completed cycles,
and find a pair of Kac-Schwarz operators $(P,Q)$ satisfying $[P,Q]=1$.
By doing this,
we obtain the quantum spectral curve for spin single Hurwitz numbers.

\end{abstract}

\maketitle


\section{Introduction}

In this paper,
we study the quantum spectral curve associated to a tau-function of the BKP hierarchy
using BKP-affine coordinates and a type of Kac-Schwarz operators,
and apply this strategy to construct the quantum spectral curve for
spin single Hurwitz numbers with completed cycles.

\subsection{KP tau-functions, BKP tau-functions, and Kac-Schwarz operators}
\label{sec-intro-BKP}

In Kyoto School's approach,
there are three equivalent descriptions of a tau-function of the KP hierarchy --
it corresponds to a point in the Sato Grassmannian,
or a vector in the bosonic or fermionic Fock space
satisfying the Hirota bilinear relations,
see e.g. \cite{djm, sa, sw}.
One has the following diagram:
\begin{equation*}
\begin{tikzpicture}[scale=0.88]
\node [align=center,align=center] at (0,0) {Sato Grassmannian};
\node [align=center,align=center] at (2.8,-1.4) {Fermionic Fock space $\cF$};
\node [align=center,align=center] at (-2.5,-1.4) {Bosonic Fock space};
\draw [->] (1.2,-0.4) -- (2.2,-1);
\draw [<-] (-1.2,-0.4) -- (-2.2,-1);
\draw [<-] (-0.55,-1.4) -- (0.55,-1.4);
\node [align=center,align=center] at (2.35,-0.65) {(I)};
\node [align=center,align=center] at (-2.4,-0.65) {(III)};
\node [align=center,align=center] at (0,-1.1) {(II)};
\end{tikzpicture}
\end{equation*}
The arrow (I) is called the Pl\"ucker map,
and can be constructed via infinite wedges, see e.g. \cite[\S 14]{kac}.
The arrow (II) is the so-called boson-fermion correspondence,
see \cite[\S 5]{djm}.
One method to achieve (III) is to use partial derivatives of the wave function
associated to a tau-function (in the bosonic Fock space),
see e.g. \cite{av}.

When a KP tau-function $\tau(\bm t)$ lies in the big-cell $Gr_{(0)}$ of the Sato Grassmannian
(or equivalently, when $\tau(0) \not = 0$ is satisfied),
the above three descriptions of $\tau(\bm t)$ and the three arrows in the diagram
can be efficiently described in terms of its KP-affine coordinates \cite{zhou1, by}.
For details, see Zhou \cite[\S 3-\S 4]{zhou1}.
Moreover,
Zhou derived a formula for the connected bosonic $n$-point functions associated to
a KP tau-function $\tau(\bm t)$
in terms of its KP-affine coordinates in \cite[\S 5]{zhou1},
which enables us to compute the coefficients of the free energy $\log\tau(\bm t)$.

Now we consider the case of the BKP hierarchy.
Similar to the case of KP,
one also has three equivalent descriptions of a BKP tau-function:
\be
\label{eq-intro-BKPdiagram}
\begin{tikzpicture}[scale=0.88]
\node [align=center,align=center] at (0,0) {Isotropic Sato Grassmannian};
\node [align=center,align=center] at (2.8,-1.4) {Fermionic Fock space $\cF_B$};
\node [align=center,align=center] at (-2.5,-1.4) {Bosonic Fock space};
\draw [->] (1.2,-0.4) -- (2.2,-1);
\draw [<-] (-1.2,-0.4) -- (-2.2,-1);
\draw [<-] (-0.55,-1.4) -- (0.55,-1.4);
\node [align=center,align=center] at (2.35,-0.65) {(I)};
\node [align=center,align=center] at (-2.4,-0.65) {(III)};
\node [align=center,align=center] at (0,-1.1) {(II)};
\end{tikzpicture}
\ee
Here the Grassmannian corresponding to the BKP hierarchy is the isotropic Sato Grassmannian,
see e.g. \cite[\S 7]{hb} and  \cite[\S 4]{al3}.
In this paper,
we will use the construction in \cite[\S 7]{hb} of the isotropic Sato Grassmannian,
since in this construction the above arrows can be clearly described in terms of the BKP-affine coordinates
$\{a_{n,m}\}_{n,m\geq 0}$ (which satisfy the condition $a_{n,m} = -a_{m,n}$).
The arrow (I) is called the Cartan map,
see \cite[\S 7.3]{hb};
and the arrow (II) is the boson-fermion correspondence of type $B$ \cite{djkm, jm}.
See \cite{wy} for descriptions in terms of BKP-affine coordinates.
In particular,
a formula for connected $n$-point functions in terms of BKP-affine coordinates has been derived in \cite[\S 4]{wy}.
This is the type $B$ analogue of Zhou's formula in \cite{zhou1}.

Our first goal in this work is to describe (III) in terms of BKP-affine coordinates.
Inspired by the arrow (III) in the KP case,
we use the derivatives of wave functions to do this.
Let $\tau(\bm t)$ be a tau-function of the BKP hierarchy satisfying $\tau(0)=1$,
and let $w_B(\bm t;z)$ be the BKP-wave function defined by Sato's formula:
\begin{equation*}
w_B (\bm t;z) = \exp\Big(\sum_{k=0}^\infty t_{2k+1} z^{2k+1}\Big)\cdot
\frac{\tau (t_1 - \frac{2}{z}, t_3 -\frac{2}{3z^3}, t_5-\frac{2}{5z^5},\cdots)}{\tau (t_1,t_3,t_5,\cdots)}.
\end{equation*}
We prove the following (see Theorem \ref{thm-wave&affine}):
\begin{Theorem}
\label{thm-intro-wave}
The BKP-affine coordinates $\{a_{n,m}\}_{n,m\geq 0}$ of $\tau(\bm t)$ can be recovered from the wave function by the following relation:
\begin{equation*}
\begin{split}
\text{span} \big\{\pd_{t_1}^kw_B(0;z) \big\}_{k\geq 0} =&
\text{span} \big\{ z^k + a_{k,0}+\sum_{i=1}^\infty 2(-1)^{i} a_{k,i} z^{-i} \big\}_{k\geq 0}\\
=& \text{span} \big\{ z^k + \sum_{i=1}^\infty 2(-1)^{i} (a_{k,i} - a_{k,0}a_{0,i})
z^{-i} \big\}_{k\geq 0}
\end{split}
\end{equation*}
as sub-vector spaces of $\cH = \bC[z]\oplus z^{-1} \bC [[z^{-1}]]$.
\end{Theorem}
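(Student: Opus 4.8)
The plan is to translate the statement into the language of the fermionic/infinite-wedge picture and then unwind Sato's formula. Recall that in the Hirota--Kawamoto--Date description of the isotropic Sato Grassmannian, the BKP point associated to $\tau(\bm t)$ with $\tau(0)=1$ is spanned by vectors of the form $v_k = z^k + \sum_{i\geq 0} c_{k,i} z^{-i-1}$ for $k\geq 0$, together with the special vacuum vector, where the coefficients $c_{k,i}$ are (up to the sign normalization $a_{n,m}=-a_{m,n}$ and factors $2(-1)^{i}$) exactly the BKP-affine coordinates $a_{k,i}$; the second displayed span in the theorem is just the result of performing row-reduction on this basis to clear the $z^0$-terms, so the two right-hand sides are manifestly equal and it suffices to prove the first equality. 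The first thing I would do is pin down precisely the dictionary between $\{a_{n,m}\}$ and this standard basis, citing \cite{hb, wy}, so that the target becomes: $\mathrm{span}\{\pd_{t_1}^k w_B(0;z)\}_{k\geq 0}$ equals the span of this standard basis of the Grassmannian point.

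Next I would expand Sato's formula for $w_B(\bm t;z)$. Setting all $\bm t$ to zero after differentiating, the exponential prefactor $\exp(\sum t_{2k+1}z^{2k+1})$ contributes polynomials in $z$, and the shift $\bm t \mapsto (t_1-\tfrac2z, t_3-\tfrac2{3z^3},\dots)$ applied to $\tau$ produces, via Taylor expansion and the relation of $\tau$ to its Grassmannian point, a series in $z^{-1}$ whose coefficients are the Plücker/Cartan coordinates, i.e. polynomials in the $a_{n,m}$. Concretely, $w_B(\bm t;z)$ is exactly the generating function that, evaluated at $\bm t=0$, gives the "first" basis vector $z^0 + a_{0,0} + \sum_i 2(-1)^i a_{0,i}z^{-i}$ of the Grassmannian point — this is the standard fact that Sato's wave function evaluated at $\bm t=0$ lies on the Grassmannian. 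Applying $\pd_{t_1}$ and then setting $\bm t=0$ should, by the same token, produce a vector still lying on (the closure of) the Grassmannian point, now with leading term $z^1$; the key computation is to check that its lower-order coefficients are precisely $a_{1,0} + \sum_i 2(-1)^i a_{1,i} z^{-i}$, and inductively that $\pd_{t_1}^k w_B(0;z) = z^k + a_{k,0} + \sum_i 2(-1)^i a_{k,i} z^{-i}$ modulo the span of the lower vectors. This is where I would use the $n$-point function formula of \cite[\S 4]{wy} (or its one-point specialization): the one-point function $\sum_k \pd_{t_1}^k w_B(0;z)$-type generating series has coefficients expressible directly through the $a_{n,m}$, which is exactly the content needed.

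The cleanest route, and the one I would actually write up, is via the fermionic picture: $w_B(\bm t;z)$ corresponds under boson--fermion correspondence of type $B$ to $\phi(z)g\vac / \langle 0|g\vac$ (with $\phi(z)$ the neutral free fermion field and $g$ the group element whose ``matrix coefficients'' are the $a_{n,m}$), and $\pd_{t_1}^k$ acting on the bosonic side corresponds to a simple operation on the fermionic side; evaluating at $\bm t=0$ extracts precisely the image of $z^k$ under the map $g$ to the Grassmannian, whose expansion in the $z^{-i}$ basis has coefficients $a_{k,i}$ up to the stated normalization. Carrying this out requires care with: (a) the exact normalization constants $2(-1)^i$ and the antisymmetry $a_{n,m}=-a_{m,n}$, which come from the type-$B$ pairing; (b) the fact that $\pd_{t_1}$ does not act as a pure shift but its effect on $\mathrm{span}\{\pd_{t_1}^k w_B(0;z)\}$ is triangular, so the \emph{span} is preserved even though individual vectors are not literally of the displayed form — hence the statement is about spans, not about each vector individually. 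I expect step (b), disentangling the action of $\pd_{t_1}$ from the prefactor versus the shifted $\tau$, to be the main technical obstacle; the rest is bookkeeping with known formulas from \cite{hb, wy}.
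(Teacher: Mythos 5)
Your overall strategy---pass to the fermionic picture, identify $\pd_{t_1}$ with the boson $H_1$, and argue that the resulting family of vectors is triangular with respect to the claimed basis---is the same route the paper takes, and your observations that the two right-hand spans agree by row reduction and that the Leibniz rule (from the division by $\tau(\bm t)$) only produces a triangular change of basis are both correct and match the paper's first reduction. However, there is a genuine gap at the step you wave through with ``applying $\pd_{t_1}$ \dots should, by the same token, produce a vector still lying on the Grassmannian point, now with leading term $z^k$.'' That is precisely the assertion to be proved, and it is not automatic, because $H_1=\sum_{i\geq 0}(-1)^{i+1}\phi_i\phi_{-i-1}$ is \emph{quadratic} in the neutral fermions. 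Consequently $\langle 0|\phi_0 H_1^k$ is not a multiple of $\langle 0|\phi_{-k}$: already at $k=3$ one finds
\begin{equation*}
\langle 0|\phi_0 H_1^3 = \tfrac12\,\langle 0|\phi_{-2}\phi_0\phi_{-1} + \tfrac12\,\langle 0|\phi_{-3},
\end{equation*}
and in general $\langle 0|\phi_0H_1^k$ contains multi-fermion states $\langle 0|\phi_{-n_1}\cdots\phi_{-n_j}$ with $n_1+\cdots+n_j=k$ and $j\geq 2$. Your proposal never addresses why the corresponding expectation values $\langle \phi_{-n_1}\cdots\phi_{-n_j}\phi(z)e^A\rangle$ stay inside the span of the one-point functions $\langle\phi_{-m}\phi(z)e^A\rangle$, $0\leq m\leq k$.

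The paper closes exactly this hole: writing $\langle 0|=\langle 0|e^A$ and conjugating each fermion through $e^A$ via the formula $e^{-A}\phi_ie^A=\phi_i+\sum_{m\geq 0}2(-1)^i(-a_{-i,m}+a_{-i,0}a_{0,m})\phi_m$ for $i\leq 0$, one sees each conjugated mode is again linear in the $\phi_j$, so Wick's theorem applies and expresses the $(j+1)$-point expectation value as a linear combination of the one-point functions $\langle\phi_{-n_i}\phi(z)e^A\rangle$ with coefficients built from the scalars $\langle\phi_{-n_p}\phi_{-n_q}e^A\rangle$; an induction on $k$ then finishes the argument. Without this Wick-theorem reduction (or some substitute for it), the claim that the span of the derivatives of the wave function equals the span of the fermionic one-point functions is unsupported, so you should supply this step explicitly rather than deferring it to ``bookkeeping with known formulas.''
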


The basis vectors in the first line
are actually the fermionic $1$-point functions:
\begin{equation*}
\begin{split}
&\langle \phi_{0} \phi(z) e^A \rangle =
\half + \sum_{i=1}^\infty (-1)^{i} a_{0,i} z^{-i},\\
&\langle \phi_{-k} \phi(z) e^A \rangle = (-1)^k\cdot
\Big( z^k + a_{k,0}+\sum_{i=1}^\infty 2(-1)^{i} a_{k,i} z^{-i} \Big),
\qquad k\geq 1.
\end{split}
\end{equation*}

Inspired by this description of (III),
we propose a possible way to describe Kac-Schwarz operators \cite{ks, sc} for BKP tau-functions
in terms of BKP-wave functions and affine coordinates.
Denote by $U_\tau$ the sub-vector space of $\cH$ discussed in the above theorem,
then we say an operator $P$ is a Kac-Schwarz operator of type $B$ for $\tau(\bm t)$ if
$P(U_\tau) \subset U_\tau$ holds.

\subsection{Quantum spectral curves}
\label{sec-intro-QSC}

The above constructions inspire us to consider the
quantum spectral curve associated to a BKP tau-function.

Here we first briefly review Gukov-Su{\l}kowski's construction of quantum spectral curves \cite{gs}
via Eynard-Orantin topological recursion \cite{eo}.
Let $\cC$ be the plane curve
\begin{equation*}
\cC = \{A(u,v) = 0\}
\end{equation*}
on the complex $2$-space $\bC\times \bC$
equipped with the symplectic form $\omega=\frac{\sqrt{-1}}{\hbar}du\wedge dv$,
where $A$ is a function in $u$ and $v$.
Then $\cC$ is a Lagrangian submanifold of this complex $2$-space.
A quantization procedure is supposed to turn the coordinates
$u$ and $v$ into operators $\hat u$ and $\hat v$ respectively
which satisfy the commutation relation $[\hat v,\hat u]=\hbar$,
and the algebra of functions in $u,v$ into a noncommutative algebra of operators.
The quantization of the polynomial $A(u,v)$ is supposed to be an operator of the following form:
\begin{equation*}
\widehat A
=\widehat A_0 +\hbar \widehat A_1 +\hbar^2 \widehat A_2 +\cdots,
\end{equation*}
where $\widehat A_0$ is supposed to be the canonical quantization of $A$,
i.e.,
the operator obtained from $A$ by replacing $u,v$ by $\hat u,\hat v$ respectively.
Inspired by the matrix model origin of the Eynard-Orantin topological recursion,
Gukov and Su{\l}kowski defined the following Baker-Akhiezer function
(see \cite[\S 2]{gs}):
\begin{equation*}
\Psi(z):=\exp\Big(\sum_{n=0}^{\infty}\hbar^{n-1}S_n(z)\Big),
\end{equation*}
where
\be
\label{eq-def-Sn}
\begin{split}
&S_0(z):=\int^z v(z)du(z),\qquad
S_1(z):=-\half \log \frac{du}{dz},\\
&S_n(z):=\sum_{2g-1+k=n}\frac{1}{k!}\int^z\cdots\int^z
\omega_{g,k}(z_1,\cdots,z_k),
\quad n\geq 2,
\end{split}
\ee
and
$\omega_{g,k}(z_1,\cdots,z_k) = W_{g,k} (z_1,\cdots,z_k) dz_1\cdots dz_k$
are the Eynard-Orantin invariants associated to
the spectral curve $u=u(z),v=v(z)$ and a Bergman kernel $B(p,q)$.
The E-O invariants are symmetric muliti-linear differentials on the spectral curve.
The integrals in the above formulas are integrals of differentials along a path on the spectral curve,
and can be referred to as the anti-derivatives of these differentials.
For curves of genus greater than zero,
one needs to consider more general Baker-Akhiezer function
which includes non-perturbative corrections represented by certain $\theta$-functions,
see \cite[\S 2.1]{gs} for details.
Gukov-Su{\l}kowski conjectured that the quantum spectral curve $\widehat A$ associated to $A$
can be obtained by solving the Schr\"odinger equation:
\begin{equation*}
\widehat A \big( \Psi(z) \big) =0.
\end{equation*}
Moreover,
it is known that if a KP tau-function $\tau(\bm t)$ can be reconstructed from
E-O topological recursion,
then the quantum spectral curve can be understood using principal specialization.
In fact,
in this case the B-A function $\Psi(z)$ (up to some unstable terms)
coincide with the first basis vector of the corresponding point in the Sato Grassmannian,
which can be obtained from the tau-function $\tau(\bm t)$ by the principal specialization $t_k = -\frac{1}{kz^k}$.
See \cite{al1, zhou2, zhou4, ms, mss, zhou5, kz, al2, zhou6, dn} for examples of quantum spectral curves in this sense.

Now in this work,
we will understand the quantum spectral curve
associated to a BKP tau-function using the same point of view.
We call an operator $\Delta$
a quantum spectral curve for a BKP tau-function $\tau$
if
$\Delta (\Phi_0^B) = 0$,
where
\begin{equation*}
\Phi_0^B(z) = 1+\sum_{i=0}^\infty 2(-1)^i a_{0,i} z^{-i}
 = w_B(0;z)
\end{equation*}
is the first basis vector of $U_\tau$.
Similar to the case of KP,
given a BKP tau-function $\tau(\bm t)$,
in general we hope to find a pair of Kac-Schwarz operators $(P,Q)$ of type $B$
such that $P(\Phi_0^B)=0$,
and $Q(\Phi_k^B) \in U_\tau$ is a Laurent series of degree $k+1$.
Moreover, we may hope $[P,Q]=1$.
Here $P$ plays the role of a quantum spectral curve.
If $\tau(\bm t)$ can be reconstructed from E-O recursion
on a curve which is the semi-classical limit of $P$,
then $P$ is the quantum spectral curve in the sense of \cite{gs}.

\subsection{Spin single Hurwitz numbers}

As an example,
we derive such a pair of Kac-Schwarz operators of type $B$
and quantum spectral curve for spin single Hurwitz numbers with completed $(r+1)$-cycles \cite{eop, gkl},
We first compute the BKP-affine coordinates for the associated tau-function
using the fermionic representation in \cite{gkl, le}.
The result is (see \S \ref{sec-sH-affine}):
\begin{equation*}
\begin{split}
&a_{0,n}^{(r,\vartheta)} = -a_{n,0}^{(r,\vartheta)} =
  \frac{p^n}{ 2\cdot n!}\cdot \exp\big( \beta\frac{n^{r+1}}{r+1} \big),
\qquad \forall n>0;\\
&a_{n,m}^{(r,\vartheta)} = \frac{p^{m+n}}{4\cdot m!\cdot n!} \cdot \frac{m-n}{m+n} \cdot
\exp \big( \beta\frac{m^{r+1} + n^{r+1}}{r+1} \big),
\qquad \forall m,n>0;
\end{split}
\end{equation*}
and $a_{0,n}^{(r,\vartheta)} = 0$.
Then we derive the Kac-Schwarz operators using the above explicit expressions.
We have (see \S \ref{sec-sH-KS}):
\begin{Theorem}
Let $P,Q$ be the following operators:
\begin{equation*}
\begin{split}
P= &\exp\Big( \frac{\beta}{r+1} \cdot \sum_{i=0}^{r}z^{-1}
\big(z\frac{\partial}{\partial z}\big)^iz\big(z\frac{\partial}{\partial z}\big)^{r-i}\Big)\partial_z\\
& \quad -p\exp\Big(\frac{2\beta}{r+1}\cdot
\sum_{i=0}^{r}z^{-2}\big(z\frac{\partial}{\partial z}\big)^iz^{2}
\big(z\frac{\partial}{\partial z}\big)^{r-i}\Big)z^{-2};\\
Q= &\exp\Big(- \frac{\beta}{r+1}\cdot \sum_{i=0}^{r}z
\big(z\frac{\partial}{\partial z}\big)^iz^{-1}\big(z\frac{\partial}{\partial z}\big)^{r-i}\Big)z,
\end{split}
\end{equation*}
then $P$ and $Q$ are Kac-Schwarz operators of the BKP tau-function $\tau^{(r,\vartheta)}$
for spin single Hurwitz numbers with $(r+1)$-completed cycles,
satisfying $[P,Q]=1$.
Moreover,
we have
$P(\Phi_0^{(r,\vartheta)}) =0$,
and
\begin{equation*}
Q(\Phi_k^{(r,\vartheta)}) =
e^{f^{(r,\vartheta)}(k)-f^{(r,\vartheta)}(k+1)}\Phi_{k+1}^{(r,\vartheta)}
-\frac{p^{k+1} \cdot e^{f^{(r,\vartheta)}(k)}}{(k+1)!}\Phi_0^{(r,\vartheta)},
\qquad \forall k\geq 0.
\end{equation*}
\end{Theorem}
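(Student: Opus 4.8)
The plan is to verify the three assertions—$[P,Q]=1$, $P(\Phi_0^{(r,\vartheta)})=0$, and the action of $Q$ on $\Phi_k^{(r,\vartheta)}$—by translating everything into statements about the explicit basis vectors of $U_\tau$ and reducing each to an identity about the ``eigenfunction'' structure of the operators $z\frac{\partial}{\partial z}$ and multiplication by powers of $z$. First I would record the basis of $U_\tau$ coming from Theorem~\ref{thm-intro-wave}: write
\begin{equation*}
\Phi_k^{(r,\vartheta)}(z) = z^k + \sum_{i\geq 1} 2(-1)^i\bigl(a_{k,i}^{(r,\vartheta)} - a_{k,0}^{(r,\vartheta)} a_{0,i}^{(r,\vartheta)}\bigr) z^{-i},
\end{equation*}
and substitute the closed forms for $a_{n,m}^{(r,\vartheta)}$ given just above the theorem. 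Setting $f^{(r,\vartheta)}(n) := \beta\frac{n^{r+1}}{r+1}$ for $n>0$ and $f^{(r,\vartheta)}(0)=0$, one checks that the coefficient combination $a_{k,i}-a_{k,0}a_{0,i}$ telescopes into a clean product of $p^{k+i}/(k!\,i!)$ with a rational prefactor and $e^{f(k)+f(i)}$; this is the bookkeeping that makes the operator identities transparent.

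Next I would establish the key observation that the exponentiated differential operators appearing in $P$ and $Q$ act diagonally on the monomials $z^n$. Concretely, $z\frac{\partial}{\partial z}\,z^n = n z^n$, so for any polynomial-shifted combination $z^{-a}(z\frac{\partial}{\partial z})^i z^{a}(z\frac{\partial}{\partial z})^{r-i} z^n = (n+a)^i n^{r-i} z^n$, and summing over $i=0,\dots,r$ gives $\frac{(n+a)^{r+1}-n^{r+1}}{a}\cdot z^n$ (using $\sum_{i=0}^r x^i y^{r-i} = \frac{x^{r+1}-y^{r+1}}{x-y}$). Hence
\begin{equation*}
\exp\Bigl(\tfrac{\beta}{r+1}\sum_{i=0}^r z^{-1}(z\tfrac{\partial}{\partial z})^i z (z\tfrac{\partial}{\partial z})^{r-i}\Bigr) z^n = e^{f(n+1)-f(n)} z^n,
\end{equation*}
and similarly the operator in $Q$ multiplies $z^n$ by $e^{f(n-1)-f(n)}$, while the second operator in $P$ multiplies $z^n$ by $e^{f(n+2)-f(n)}$ (the factor $2\beta$ and the $z^{-2}\cdots z^2$ conjugation account for the shift by $2$; note $2\cdot\frac{(n+2)^{r+1}-n^{r+1}}{2}=(n+2)^{r+1}-n^{r+1}$). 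With these diagonalizations in hand, $P = D_1\partial_z - p\,D_2\,z^{-2}$ where $D_1,D_2$ are the diagonal operators, so $P(z^n) = n\,e^{f(n)-f(n-1)} z^{n-1} - p\,e^{f(n)-f(n-2)} z^{n-2}$ on the polynomial part, and one extends this term-by-term to the Laurent tails (treating $f$ on negative and zero indices by the stated convention, with $n=0,1$ requiring a small separate check because of the $\half$ in the $1$-point function normalization).

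With the action of $P$ and $Q$ on monomials known, the three claims become finite identities. For $[P,Q]=1$: both $PQ$ and $QP$ act diagonally-plus-lower-order on each $z^n$, and the commutator collapses because the ``creation'' part $z$ of $Q$ and the ``annihilation'' part $\partial_z$ of $P$ produce the canonical $[\partial_z, z]=1$ after the diagonal conjugating factors cancel in the difference—this is a bounded computation in $n$ that I would carry out once and for all. For $P(\Phi_0^{(r,\vartheta)})=0$: apply $P$ termwise to $\Phi_0^{(r,\vartheta)}(z) = 1 + \sum_{i\geq 0} 2(-1)^i a_{0,i}^{(r,\vartheta)} z^{-i}$ and check that the two terms of $P$ cancel order by order in $z^{-1}$; this is exactly a recursion $i\,e^{\cdots}a_{0,i}^{(r,\vartheta)}$-type relation that follows from the explicit formula $a_{0,i}^{(r,\vartheta)} = \frac{p^i}{2\, i!}e^{f(i)}$, since $\frac{p^i}{2\,i!}$ already satisfies $p\cdot\frac{p^{i-2}}{(i-1)!} = i\cdot\frac{p^{i-1}}{(i-1)!}\cdot\frac{p}{i}$—so the combinatorial prefactors match and the exponential factors match by the shift computation above. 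For the action of $Q$ on $\Phi_k^{(r,\vartheta)}$: apply $Q$ termwise, use that $Q(z^n) = e^{f(n-1)-f(n)} z^{n+1}$, and recognize the resulting Laurent series as the stated linear combination of $\Phi_{k+1}^{(r,\vartheta)}$ and $\Phi_0^{(r,\vartheta)}$; the coefficient $-p^{k+1}e^{f(k)}/(k+1)!$ is forced by matching the constant ($z^0$) coefficient, and then all other coefficients agree automatically because $\{\Phi_j^{(r,\vartheta)}\}$ is a basis with prescribed leading terms.

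I expect the main obstacle to be the low-degree bookkeeping: the normalization $\langle\phi_0\phi(z)e^A\rangle = \half + \cdots$ carries a $\half$ rather than a $1$, so the vector $\Phi_0^B$ is \emph{not} of the same shape as $\Phi_k^B$ for $k\geq 1$, and the conventions $f^{(r,\vartheta)}(0)=0$ versus $f^{(r,\vartheta)}(n)=\beta n^{r+1}/(r+1)$ create edge cases when $n=0,1,2$ in the formula for $P(z^n)$. Verifying that $P$ and $Q$ genuinely preserve $U_\tau$ (not merely annihilate/shift the listed generators) requires checking closure on the full span, which reduces to the monomial computations above but must be stated carefully. Everything else is the routine algebra of geometric-series identities for $\sum_{i=0}^r x^i y^{r-i}$ and the explicit factorials, which I would not grind through in detail.
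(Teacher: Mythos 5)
Your overall strategy is the same as the paper's: write $P=E_1\partial_z-pE_2z^{-2}$ and $Q=E_{-1}z$ with $E_l$ the exponentiated operator, use $z^{-l}(z\partial_z)z^{l}=z\partial_z+l$ together with $\sum_{i=0}^r x^iy^{r-i}=\frac{x^{r+1}-y^{r+1}}{x-y}$ to get $E_l(z^n)=e^{f(n+l)-f(n)}z^n$, deduce the action of $P,Q$ on monomials, obtain $[P,Q]=1$, and then check the action on the $\Phi_k^{(r,\vartheta)}$ termwise. Two problems, one minor and one substantive. Minor: with $Q=E_{-1}\circ z$ one gets $Q(z^n)=E_{-1}(z^{n+1})=e^{f(n)-f(n+1)}z^{n+1}$, not $e^{f(n-1)-f(n)}z^{n+1}$ as you wrote in the last paragraph; your own earlier (correct) statement that $E_{-1}$ multiplies $z^n$ by $e^{f(n-1)-f(n)}$ shows this is just an index slip, but it would propagate into the $Q(\Phi_k)$ computation.

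Substantive: your justification for the identity $Q(\Phi_k^{(r,\vartheta)})=e^{f(k)-f(k+1)}\Phi_{k+1}^{(r,\vartheta)}-\frac{p^{k+1}e^{f(k)}}{(k+1)!}\Phi_0^{(r,\vartheta)}$ — ``match the $z^0$ coefficient, and all other coefficients agree automatically because $\{\Phi_j\}$ is a basis with prescribed leading terms'' — is circular. The $\Phi_j$ are a basis of $U_\tau$, not of $\cH$, so the non-negative part of a Laurent series determines its expansion in this basis only \emph{after} you know the series lies in $U_\tau$; but that membership is exactly the Kac--Schwarz property you are trying to prove. The actual content of the theorem is the matching of the infinitely many negative-degree coefficients, which reduces to a concrete rational-function identity in $(i,k)$ (for $P$ the paper checks $\frac{k^2i}{i+k-1}-\frac{k(k-1)i}{i+k-2}=\frac{i(i-1)^2}{i+k-1}-\frac{i(i-1)(i-2)}{i+k-2}$, and an analogous one is needed for $Q$). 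This step is elementary but cannot be waved away; it is also what establishes $P(\Phi_k^{(r,\vartheta)})=ke^{f(k)-f(k-1)}\Phi_{k-1}^{(r,\vartheta)}-pe^{f(k)-f(k-2)}\Phi_{k-2}^{(r,\vartheta)}$ and hence that $P$ itself preserves $U_\tau$, a claim of the theorem your write-up does not otherwise address. Your verification of $P(\Phi_0^{(r,\vartheta)})=0$ by order-by-order cancellation is fine and matches the $k=0$ case of that identity.
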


By taking the semi-classical limit of $P$,
we obtain a plane curve
on the complex $2$-space $\bC^* \times \bC$ with sympletic coordinates $(u,v)$
where $x=e^u$ and $y=v$
(see \S \ref{sec-sH-QSC-main} for details):
\begin{equation*}
x=-ye^{-y^r}.
\end{equation*}
It has been conjectured by Giacchetto, Kramer, Lewa\'nski \cite{gkl}
and proved by Alexandrov and Shadrin \cite{as} that
the spin single Hurwitz numbers can be reconstructed from
the E-O topological recursion on this curve,
and thus $P$ is indeed the quantum spectral curve in the sense of Gukov-Su{\l}kowski.
It is worth mentioning that the Bergman kernel used in \cite{gkl, as}
coincides with the correction term in the formula \cite[(74)]{wy} for connected $2$-point function
of a BKP tau-function.

The rest of this paper is arranged as follows.
In \S \ref{sec-pre},
we recall some basics of the boson-fermion correspondence of type $B$.
In \S \ref{sec-isotr-Gr},
we first review the isotropic Sato Grassmannian following \cite{hb},
and then prove Theorem \ref{thm-intro-wave}.
This leads to the notion of Kac-Schwarz operators of type $B$.
In \S \ref{sec-sH-affine},
we compute the BKP-affine coordinates of the tau-function for
spin single Hurwitz numbers with completed cycles.
Finally in \S \ref{sec-sH-KS},
we use the explicit expressions of the affine coordinates
to derive the K-S operators $(P,Q)$ and
quantum spectral curve for the spin Hurwitz numbers.

\section{Preliminaries of Boson-Fermion Correspondence of Type $B$}
\label{sec-pre}

In this section,
we briefly review some preliminaries of the boson-fermion correspondence of type $B$.
We first recall the construction of the fermionic Fock space and boson-fermion correspondence
via neutral fermions \cite{yo, djkm, jm, va},
and then recall the notion of BKP tau-functions and BKP-affine coordinates \cite{hb, wy}.

\subsection{Neutral fermions and fermionic Fock space of type $B$}

First we recall the construction of the fermionic Fock space of type $B$
via neutral fermions.
See \cite{djkm, jm} for details.

The neutral fermions are a family of operators $\{\phi_m\}_{m\in \bZ}$
satisfying the anti-commutation relations:
\be
\label{eq-anticomm}
[\phi_m, \phi_n]_+ =(-1)^m \delta_{m+n,0},
\ee
where $[\phi_m, \phi_n]_+ = \phi_m\phi_n + \phi_n\phi_m$.
In particular, $\phi_0^2=\half$
and $\phi_n^2=0$ for $n\not=0$.
The fermionic Fock space $\cF_B$ of type $B$ is the infinite-dimensional $\bC$-vector space
of all formal (infinite) summations
\begin{equation*}
\sum
c_{k_1,\cdots,k_n}
\phi_{k_1} \phi_{k_2} \cdots \phi_{k_n} |0\rangle,
\qquad
c_{k_1,\cdots,k_n} \in\bC,
\end{equation*}
over $n\geq 0$ and $k_1>\cdots >k_n \geq 0$,
where $|0\rangle$ is a vector satisfying:
\be
\label{eq-B-anni}
\phi_i |0\rangle = 0,
\qquad \forall i <0.
\ee
The vector $|0\rangle$ is called the fermionic vacuum vector.
The operators $\{\phi_n\}_{n\geq 0}$ are called the fermionic creators,
and $\{\phi_n\}_{n<0}$ are called the fermionic annihilators.
The Fock space $\cF_B$ can be decomposed as a direct sum of even and odd parts:
\begin{equation*}
\cF_B = \cF_B^0 \oplus \cF_B^1,
\end{equation*}
where $\cF_B^0$ and $\cF_B^1$ are the subspaces with
even and odd numbers of the creators $\{\phi_i\}_{i\geq 0}$ respectively.

The even part $\cF_B^0$ has a natural basis
indexed by all strict partitions.
(Recall that a partition of integer $\mu=(\mu_1,\cdots,\mu_n)$ is called strict
if $\mu_1>\mu_2\cdots>\mu_n>0$.)
Denoted by $DP$ the set of all strict partitions,
and here we allow the empty partition $(\emptyset) \in DP$.
Given a strict partition $\mu= (\mu_1>\cdots >\mu_n > 0) $,
denote:
\be
|\mu\rangle
= \begin{cases}
\phi_{\mu_1}\phi_{\mu_2}\cdots \phi_{\mu_n}|0\rangle, &\text{ for $n$ even;}\\
\sqrt{2}\cdot \phi_{\mu_1}\phi_{\mu_2}\cdots \phi_{\mu_n} \phi_0 |0\rangle, &\text{ for $n$ odd}.
\end{cases}
\ee
In particular, one has $|(\emptyset)\rangle =|0\rangle$.
Then $\{|\mu\rangle\}_{\mu\in DP}$ form a basis for $\cF_B^0$.

The dual Fock space $\cF_B^*$ is defined to be the vector space spanned by:
\begin{equation*}
\langle 0|
\phi_{k_n}  \cdots \phi_{k_2} \phi_{k_1},
\qquad
k_1 < k_2 < \cdots < k_n \leq 0,
\quad n\geq 0,
\end{equation*}
where $\langle 0|$ is a vector satisfying:
\be
\label{eq-B-anni-2}
\langle 0| \phi_i = 0,
\qquad \forall i >0.
\ee
Then there is a nondegenerate pairing $\cF_B^* \times \cF_B \to \bC$ determined by
\eqref{eq-anticomm}, \eqref{eq-B-anni}, \eqref{eq-B-anni-2},
and the requirements
$\langle 0 | 0 \rangle =1$ and $\langle 0|\phi_0 | 0\rangle =0$.
It is easy to check that for an arbitrary sequence $k_1>k_2>\cdots >k_n\geq 0$,
\be
\label{eq-basic-VEV}
\langle 0 | \phi_{-k_n}\cdots \phi_{-k_1}
\phi_{k_1}\cdots \phi_{k_n} |0\rangle = \begin{cases}
(-1)^{k_1+\cdots +k_n}, &\text{ if $k_n\not=0$;}\\
\half\cdot (-1)^{k_1+\cdots +k_{n-1}},&\text{ if $k_n=0$.}
\end{cases}
\ee
In general,
the vacuum expectation value of a product of neutral fermions
can be computed using Wick's Theorem:
\begin{equation*}
\langle 0| \phi_{i_1}\phi_{i_2}\cdots \phi_{i_{2n}}|0\rangle
=\sum_{\substack{(p_1,q_1,\cdots,p_n,q_n)\\p_k<q_k, \quad p_1<\cdots<p_n}}
\text{sgn}(p,q)\cdot \prod_{j=1}^n
\langle 0| \phi_{i_{p_j}}\phi_{i_{q_j}} |0\rangle,
\end{equation*}
where $(p_1,q_1,\cdots,p_n,q_n)$ is a permutation of $(1,2,\cdots,2n)$,
and $\text{sgn}(p,q)$ denotes its sign ($\text{sgn}=1$ for an even permutation, and $\text{sgn}=-1$ for an odd one).
In what follows,
we will denote by
\begin{equation*}
\langle A \rangle = \langle 0 |A|0\rangle
\end{equation*}
the vacuum expectation value of an operator $A$ on the fermionic Fock space.

The normal-ordered product $:\phi_i\phi_j:$ of two neutral fermions
is defined by:
\be
\label{eq-def-normal-order}
:\phi_i\phi_j: = \phi_i\phi_j -\langle \phi_i\phi_j \rangle.
\ee
In particular,
one has $:\phi_0^2: = \phi_0^2 -\langle \phi_0^2\rangle=0$.
The relation \eqref{eq-anticomm} is equivalent to:
\be
\label{eq-normal-fermfield}
\phi(w)\phi(z)
= :\phi(w)\phi(z): +i_{w,z}\frac{w-z}{2(w+z)},
\ee
where $\phi(z)$ is the fermionic field:
\be
\label{eq-ferm-field}
\phi(z) = \sum_{i\in \bZ} \phi_i z^i,
\ee
and $i_{w,z}$ means formally expanding on $\{|w|>|z|\}$, i.e.,
\be
i_{w,z}\frac{w-z}{2(w+z)}= \half +
\sum_{j=1}^\infty (-1)^{j} w^{-j} z^j.
\ee

\subsection{Boson-fermion correspondence of type $B$}
\label{sec-pre-bf}

In this subsection,
we recall the boson-fermion correspondence of type $B$.
See \cite{djkm, jm}.

First we recall the construction of the bosonic operators $\{H_n\}_{n:\text{ odd}}$ of type $B$.
Let $n\in 2\bZ+1$ be an odd integer,
and define $H_n$ by:
\be
\label{eq-def-boson}
H_n = \half \sum_{i\in \bZ} (-1)^{i+1} \phi_i\phi_{-i-n}.
\ee
Then they satisfy the following canonical commutation relation:
\begin{equation*}
[H_n,H_m] = H_nH_m-H_mH_n= \frac{n}{2}\cdot \delta_{m+n,0},
\qquad
\forall n,m \text{ odd}.
\end{equation*}
The operators $H_n$ are called the bosons of type $B$.
One has:
$H_n |0\rangle = 0$,
for every $n> 0$ odd.
Denote by $H(z)$ the generating series of these bosons:
\be
H(z)= \sum_{n\in \bZ: \text{ odd}} H_n z^{-n}
\ee
Notice that one can also define $H_{2k}$ using \eqref{eq-def-boson},
and the anti-commutation relation \eqref{eq-anticomm} implies
$H_{2k}=0$ for every $k\not= 0$.
In this sense,
one easily checks that:
\be
H(z) = -\half :\phi(-z)\phi(z):.
\ee

Now we recall the boson-fermion correspondence of type $B$.
Let $\bm t=(t_1,t_3,t_5,\cdots)$ be a family of formal variables,
and denote:
\be
\label{eq-def-Gamma+-}
\begin{split}
&\Gamma_+^B(\bm t) = \exp\Big( \sum_{n>0: \text{ odd}} t_n H_n \Big)
= \exp \Big(
\half \sum_{n>0: \text{ odd}} t_n \sum_{i\in \bZ} (-1)^{i+1} \phi_i\phi_{-i-n}\Big),\\
&\Gamma_-^B(\bm t) = \exp\Big(
\sum_{n>0: \text{ odd}} t_n H_{-n} \Big)
= \exp\Big( \half \sum_{n>0: \text{ odd}} t_n \sum_{i\in \bZ} (-1)^{i+1} \phi_i\phi_{-i+n}\Big).
\end{split}
\ee
\begin{Theorem}
[\cite{djkm}]
The following map is a linear isomorphism:
\begin{equation*}
\sigma_B :\cF_B \to \bC[w][\![t_1,t_2,\cdots]\!]/\sim,
\qquad
|U\rangle \mapsto \sum_{i=0}^1 \omega^i\cdot \langle i |\Gamma_+^B(\bm t)|U\rangle,
\end{equation*}
where $\omega^2\sim 1$,
and $\langle 1| =\sqrt{2}\langle 0|\phi_0 \in (\cF_B^1)^*$.
Under this isomorphism,
one has:
\be
\label{eq-bfcor-boson}
\sigma_B (H_n |U\rangle) = \frac{\pd}{\pd t_n} \sigma_B(|U\rangle),
\qquad
\sigma_B (H_{-n}|U\rangle)=\frac{n}{2} t_n \cdot \sigma_B(|U\rangle),
\ee
for every odd $n>0$.
Moreover,
\be
\label{eq-bfcor-ferm}
\sigma_B (\phi(z)|U\rangle) =
\frac{1}{\sqrt{2}} \omega\cdot e^{\xi(\bm t,z)} e^{-\xi (\tilde\pd ,z^{-1})} \sigma_B(|U\rangle),
\ee
where
$\xi(\bm t,z) = \sum_{n>0\text{ odd}} t_{n}z^{n}$
and $\tilde\pd = (2\pd_{t_{ 1}},\frac{2}{3}\pd_{t_{ 3}},\frac{2}{5}\pd_{t_{ 5}},\cdots)$.
\end{Theorem}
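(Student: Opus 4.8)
The plan is to deduce everything from the commutation relations between neutral fermions and the bosons $H_{\pm n}$ together with the structure theory of Heisenberg Fock modules; throughout I abbreviate $\Gamma_+^B=\Gamma_+^B(\bm t)$ and use $[H_n,H_m]=\tfrac n2\delta_{n+m,0}$ and $H_n\vac=0$ for odd $n>0$. For the first relation in \eqref{eq-bfcor-boson} I note that the positive bosons commute, so $\pd_{t_n}\Gamma_+^B=\Gamma_+^B H_n$ and hence $\pd_{t_n}\langle i|\Gamma_+^B|U\rangle=\langle i|\Gamma_+^B H_n|U\rangle$. For the second I conjugate: since $[\,\sum_{m>0}t_mH_m,\,H_{-n}]=\tfrac n2 t_n$ is a scalar, one has $\Gamma_+^B H_{-n}(\Gamma_+^B)^{-1}=H_{-n}+\tfrac n2 t_n$, so that $\langle i|\Gamma_+^B H_{-n}|U\rangle=\tfrac n2 t_n\langle i|\Gamma_+^B|U\rangle+\langle i|H_{-n}\Gamma_+^B|U\rangle$. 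The last term vanishes because a short computation with \eqref{eq-anticomm} shows that $H_{-n}$ is the adjoint of $H_n$ under the pairing and that $H_n(\phi_0\vac)=\phi_{-n}\vac=0$, whence $\lvac H_{-n}=0$ and $\langle 1|H_{-n}=0$.

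Next I would show $\sigma_B$ is an isomorphism. Since $\Gamma_+^B$ preserves parity and $\lvac$ (resp. $\langle 1|$) pairs nontrivially only with $\cF_B^0$ (resp. $\cF_B^1$), $\sigma_B$ carries $\cF_B^0$ into $\bC[\![\bm t]\!]$ and $\cF_B^1$ into $\omega\,\bC[\![\bm t]\!]$ and is graded. The two boson relations then give $\sigma_B(H_{-\lambda}\vac)=\prod_i\tfrac{\lambda_i}{2}\,t_{\lambda_i}$ for each partition $\lambda$ into odd parts, and likewise $\sigma_B(H_{-\lambda}\phi_0\vac)=\tfrac1{\sqrt2}\omega\prod_i\tfrac{\lambda_i}{2}\,t_{\lambda_i}$. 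Because $[H_n,H_{-n}]=\tfrac n2\neq 0$, the Gram matrices of $\{H_{-\lambda}\vac\}$ and $\{H_{-\lambda}\phi_0\vac\}$ are nondegenerate, so these monomials are linearly independent and span highest-weight Fock modules; the graded dimension of such a module at level $N$ is the number of partitions of $N$ into odd parts, which by Euler's identity equals the number of strict partitions of $N$, i.e. the level-$N$ dimension of $\cF_B^0$ (and of $\cF_B^1$). Hence these monomials are bases, and their images are exactly the monomial bases of $\bC[\![\bm t]\!]$ and $\omega\,\bC[\![\bm t]\!]$, giving the isomorphism.

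The main work is the fermion relation \eqref{eq-bfcor-ferm}. A direct computation from \eqref{eq-anticomm} gives $[H_{\pm n},\phi(z)]=z^{\pm n}\phi(z)$ for odd $n>0$, and exponentiating yields $\Gamma_+^B\phi(z)(\Gamma_+^B)^{-1}=e^{\xi(\bm t,z)}\phi(z)$. Writing $V(z)=\tfrac1{\sqrt2}\omega\,e^{\xi(\bm t,z)}e^{-\xi(\tilde\pd,z^{-1})}$ for the proposed image of $\phi(z)$, one checks on the bosonic side that
\[
[\pd_{t_n},V(z)]=z^nV(z),\qquad [\tfrac n2 t_n,V(z)]=z^{-n}V(z),
\]
the first because only $e^{\xi(\bm t,z)}$ feels $\pd_{t_n}$, the second because only $e^{-\xi(\tilde\pd,z^{-1})}$ feels multiplication by $t_n$; here the rescaling $\tilde\pd=(2\pd_{t_1},\tfrac23\pd_{t_3},\dots)$ is exactly what produces the coefficient $z^{-n}$. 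Under $\sigma_B$ these mirror $[H_{\pm n},\phi(z)]=z^{\pm n}\phi(z)$. I would then verify the base cases directly: using $\langle\phi_k\rangle=0$ and $\langle\phi_k\phi_0\rangle=\half\delta_{k,0}$ one gets $\sigma_B(\phi(z)\vac)=\tfrac1{\sqrt2}\omega\,e^{\xi(\bm t,z)}=V(z)\cdot 1$, and the corresponding computation (with the three-point function vanishing) gives $\sigma_B(\phi(z)\phi_0\vac)=\half\,e^{\xi(\bm t,z)}=V(z)\,\sigma_B(\phi_0\vac)$.

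Finally I would propagate. From $\phi(z)H_{-n}=H_{-n}\phi(z)-z^{-n}\phi(z)$ and the boson relations,
\[
\sigma_B(\phi(z)H_{-n}|U\rangle)=\bigl(\tfrac n2 t_n-z^{-n}\bigr)\sigma_B(\phi(z)|U\rangle),
\]
while $V(z)\sigma_B(H_{-n}|U\rangle)=V(z)\,\tfrac n2 t_n\,\sigma_B(|U\rangle)$; the two agree precisely by $[\tfrac n2 t_n,V(z)]=z^{-n}V(z)$, and the analogous check with $H_n$ uses $[\pd_{t_n},V(z)]=z^nV(z)$. Since $\{H_{-\lambda}\vac\}$ and $\{H_{-\lambda}\phi_0\vac\}$ span $\cF_B^0$ and $\cF_B^1$, induction on the number of creation operators $H_{-n}$ propagates the identity from the two vacua to all of $\cF_B$, establishing \eqref{eq-bfcor-ferm}. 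I expect this fermion relation to be the main obstacle: the delicate points are the type-$B$ normalizations ($\phi_0^2=\half$, the factors of $\sqrt2$, and the rescaling in $\tilde\pd$) and the bookkeeping of the parity switch induced by $\phi(z)$, which must be tracked through both the base cases and the inductive step.
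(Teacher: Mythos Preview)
The paper does not give a proof of this statement: it is quoted as a classical result of Date--Jimbo--Kashiwara--Miwa (the bracket \cite{djkm} in the theorem header signals this), and the text moves on immediately to the next cited theorem. So there is no ``paper's own proof'' to compare against.

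That said, your argument is the standard one and is correct in outline. The boson relations follow exactly as you say from $[\sum t_mH_m,H_{-n}]=\tfrac n2 t_n$ and the dual vacuum conditions; the isomorphism follows from matching graded dimensions via Euler's identity between odd and strict partitions; and the vertex-operator formula is pinned down by the commutators $[H_{\pm n},\phi(z)]=z^{\pm n}\phi(z)$ together with the two vacuum evaluations, which you computed correctly (including the factors of $\tfrac12$, $\sqrt2$, and the rescaling in $\tilde\pd$). One small point worth making explicit in your write-up: your inductive step uses $H_{-n}$ to propagate from the vacua, but you should also note why this suffices---namely, that the $\{H_{-\lambda}|0\rangle\}$ and $\{H_{-\lambda}\phi_0|0\rangle\}$ genuinely span $\cF_B^0$ and $\cF_B^1$ respectively, which you established in the isomorphism paragraph. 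With that cross-reference made, the proof is complete.
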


Furthermore, one has the following:
\begin{Theorem}
[\cite{yo}]
\label{eq-bf-schur}
Let $\lambda = (\lambda_1>\cdots>\lambda_l>0)\in DP$, then one has:
\be
\label{eq-bf-schurq}
Q_\lambda(\bm t/2) = 2^{\half l(\lambda)}\cdot \sigma_B (\phi_{\lambda_1}\cdots \phi_{\lambda_l}|\alpha(\lambda)\rangle),
\ee
where $Q_\lambda$ is the Schur $Q$-function indexed by $\lambda\in DP$,
and
\begin{equation*}
|\alpha(\lambda)\rangle = \begin{cases}
|0\rangle, & \text{if $l(\lambda)$ is even;}\\
\sqrt{2}\phi_0|0\rangle, & \text{if $l(\lambda)$ is odd}.
\end{cases}
\end{equation*}
\end{Theorem}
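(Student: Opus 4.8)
The plan is to bridge \eqref{eq-bf-schurq} to the classical vertex-operator generating function for Schur $Q$-functions, using the fermionic realization \eqref{eq-bfcor-ferm} as the translation between the two sides. First I would package the creators into a single generating series. By \eqref{eq-ferm-field}, $\phi_{\lambda_j}$ is the coefficient of $z_j^{\lambda_j}$ in $\phi(z_j)$, so
\[
\phi_{\lambda_1}\cdots\phi_{\lambda_l}|\alpha(\lambda)\rangle
= \big[z_1^{\lambda_1}\cdots z_l^{\lambda_l}\big]\,\phi(z_1)\cdots\phi(z_l)|\alpha(\lambda)\rangle,
\]
and it suffices to evaluate $\sigma_B\big(\phi(z_1)\cdots\phi(z_l)|\alpha(\lambda)\rangle\big)$ as a formal series and extract the indicated coefficient. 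Note that $\phi_{\lambda_1}\cdots\phi_{\lambda_l}|\alpha(\lambda)\rangle$ is always an even state (for $l$ even there are $l$ creators; for $l$ odd the extra $\phi_0$ in $|\alpha(\lambda)\rangle$ makes the count even), so its image lies in the $\omega^0$-component, consistent with $Q_\lambda(\bm t/2)$ carrying no $\omega$.

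Next I would bosonize by applying \eqref{eq-bfcor-ferm} $l$ times, turning each $\phi(z_j)$ into the vertex operator $\Gamma(z_j):=e^{\xi(\bm t,z_j)}e^{-\xi(\tilde\pd,z_j^{-1})}$:
\[
\sigma_B\big(\phi(z_1)\cdots\phi(z_l)|\alpha(\lambda)\rangle\big)
= 2^{-l/2}\,\omega^l\,\Gamma(z_1)\cdots\Gamma(z_l)\,\sigma_B\big(|\alpha(\lambda)\rangle\big).
\]
The two base values $\sigma_B(|0\rangle)=1$ and $\sigma_B(\sqrt2\,\phi_0|0\rangle)=\omega$ follow at once from $H_n|0\rangle=0$ for odd $n>0$ (so $\Gamma_+^B$ fixes both $|0\rangle$ and $\phi_0|0\rangle$) together with $\langle1|=\sqrt2\,\langle0|\phi_0$ and $\langle0|\phi_0^2|0\rangle=\tfrac12$. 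In either parity the prefactor $\omega^l\cdot\sigma_B(|\alpha(\lambda)\rangle)$ reduces to $1$ in the $\omega^0$-component, so both cases collapse to the same computation applied to the constant $1$.

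The technical heart is the normal ordering of $\Gamma(z_1)\cdots\Gamma(z_l)$. Commuting each annihilation half $e^{-\xi(\tilde\pd,z_i^{-1})}$ to the right past each creation half $e^{\xi(\bm t,z_j)}$ with $i<j$ produces a central contraction: since $\tilde\pd=(2\pd_{t_1},\tfrac23\pd_{t_3},\dots)$ and $\sum_{n>0\text{ odd}}\tfrac2n x^n=\log\tfrac{1+x}{1-x}$, each commutation contributes the scalar $\tfrac{z_i-z_j}{z_i+z_j}$, expanded on $|z_i|>|z_j|$ exactly as prescribed by $i_{w,z}$ in \eqref{eq-normal-fermfield}. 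The fully right-moved annihilations kill the constant $1$, leaving
\[
\sigma_B\big(\phi(z_1)\cdots\phi(z_l)|\alpha(\lambda)\rangle\big)
= 2^{-l/2}\prod_{1\le i<j\le l}\frac{z_i-z_j}{z_i+z_j}\,\exp\Big(\sum_{j=1}^l\xi(\bm t,z_j)\Big).
\]

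Finally I would identify this with the Schur $Q$-generating function. With the dictionary $p_n=n\,t_n$ (odd $n$), the shift $\bm t\mapsto\bm t/2$ makes $e^{\xi(\bm t,z)}=\exp(\sum_{n\text{ odd}}t_n z^n)$ coincide with $\sum_{r\ge0}q_r z^r=\exp(\sum_{n\text{ odd}}\tfrac2n p_n z^n)$, the generating series of the one-row functions $q_r=Q_{(r)}$ evaluated at $\bm t/2$. The classical identity $\prod_{i<j}\tfrac{z_i-z_j}{z_i+z_j}\prod_j\big(\sum_{r\ge0}q_r z_j^r\big)=\sum_{\mu\in DP}Q_\mu(\bm t/2)\cdot(\text{monomial in }z_1,\dots,z_l)$ then yields, upon extracting $[z_1^{\lambda_1}\cdots z_l^{\lambda_l}]$ with $\lambda_1>\cdots>\lambda_l>0$ (the strict ordering being singled out by the antisymmetry in $\prod\tfrac{z_i-z_j}{z_i+z_j}$), precisely $Q_\lambda(\bm t/2)$, which together with the prefactor $2^{-l/2}$ gives \eqref{eq-bf-schurq}. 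The main obstacle is twofold: carrying out the contraction with the correct constant and the correct expansion region (so the factor is $\tfrac{z_i-z_j}{z_i+z_j}$ rather than its reciprocal or a sign-variant), and matching all normalizations in the last step — the $2^{-l/2}$, the argument $\bm t/2$, the parity bookkeeping through $|\alpha(\lambda)\rangle$, and the verification that coefficient extraction returns exactly $Q_\lambda$ for the strict partition $\lambda$.
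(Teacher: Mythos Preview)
The paper does not supply its own proof of this theorem; it is quoted verbatim from \cite{yo} and used as a black box (see the statement immediately after the boson-fermion correspondence theorem in \S\ref{sec-pre-bf}). So there is nothing in the paper to compare against.

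Your argument is correct and is in fact the standard vertex-operator proof that goes back to You's original paper. A few small points worth tightening: (i) the iterated application of \eqref{eq-bfcor-ferm} is legitimate because $\omega$ is central, so the prefactors $\tfrac{1}{\sqrt2}\omega$ simply accumulate; (ii) your contraction computation is right, since $[\xi(\tilde\pd,z_i^{-1}),\xi(\bm t,z_j)]=\sum_{n>0\text{ odd}}\tfrac{2}{n}(z_j/z_i)^n=\log\tfrac{z_i+z_j}{z_i-z_j}$ in the region $|z_i|>|z_j|$, giving exactly $\tfrac{z_i-z_j}{z_i+z_j}$ after exponentiation of the negative; (iii) the ``classical identity'' you invoke at the end is precisely Macdonald's raising-operator formula $Q_\lambda=\prod_{i<j}\tfrac{1-R_{ij}}{1+R_{ij}}\,q_{\lambda_1}\cdots q_{\lambda_l}$, rewritten in generating-series form, and with the convention $p_n=nt_n$ one indeed has $e^{\xi(\bm t,z)}=\sum_{r\ge0}q_r(\bm t/2)z^r$, so the normalizations match. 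The only place to be a touch more explicit is in justifying that coefficient extraction at a \emph{strict} $\lambda$ with $\lambda_l>0$ returns $Q_\lambda$ on the nose (rather than a signed sum), but this is immediate from the antisymmetry of $\prod\tfrac{z_i-z_j}{z_i+z_j}$ and the standard interpretation of the raising-operator expansion.
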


See e.g. \cite{hh, sch, mac} for an introduction to Schur $Q$-functions and the relation to
the characters of projective representations of the symmetric groups $S_n$.

\subsection{Tau-Functions for BKP Hierarchy and Affine Coordinates}
\label{sec-pre-affine}

In this subsection,
we recall the tau-functions of the BKP hierarchy and their affine coordinates.
See \cite{hb, wy} for details.

BKP hierarchy was introduced by Kyoto School \cite{djkm}.
A tau-function of the BKP hierarchy
is the image of a vector $e^g|0\rangle \in \cF_0$ under boson-fermion correspondence:
\begin{equation*}
\tau(\bm t) = \langle 0| \Gamma_+^B(\bm t) e^g |0\rangle,
\end{equation*}
where $\bm t = (t_1,t_3,t_5,\cdots)$,
and $g$ is of the form
$g=\sum_{m,n\in \bZ} c_{m,n}:\phi_m\phi_n:$
where
\be
\label{eq-constraints-goinfty}
c_{m,n}=0,\qquad \text{ for $|m-n|>>0$}.
\ee
In this paper we will always regard a tau-function $\tau(\bm t)$ as
a formal power series in the time variables $t_1,t_3,t_5,\cdots$.
A function $\tau =\tau(\bm t)$ is a BKP tau-function if and only if
the following Hirota bilinear equation holds
(see \cite{djkm}):
\be
\label{eq-BKP-Hirota}
\Res_{z=0} \Big(
z^{-1} e^{\xi (\bm t-\bm s,z)}
\tau (\bm t -2[z^{-1}]) \tau(\bm s + 2[z^{-1}])
dz \Big)  = \tau(\bm t) \tau(\bm s),
\quad \forall \bm t, \bm s,
\ee
where $[z^{-1}]$ denotes the sequence $(\frac{1}{z},\frac{1}{3z^3},\frac{1}{5z^5},\cdots)$.

In the rest of this section,
we recall the BKP-affine coordinates for a tau-function
(in the big cell of the isotropic Sato Grassmanian).
Let $\tau = \tau(\bm t)$ be a tau-function of the BKP hierarchy
with initial value $\tau(\bm 0) = 1$,
then it can be uniquely represented as a Bogoliubov transform of the following form
in the fermionic Fock space:
\be
\tau(\bm t) = \sigma_B (e^A |0\rangle),
\ee
where $A$ is of the following form:
\be
\label{eq-bog-A}
A= \sum_{n,m\geq 0} a_{n,m} \phi_m \phi_n,
\qquad a_{n,m}\in \bC.
\ee
Notice that $A$ only involves fermionic creators,
and here we do not need to impose constraints such like \eqref{eq-constraints-goinfty} on the coefficients.
We will always assume:
\be
\label{eq-bog-A-cond}
a_{n,m} = -a_{m,n},
\qquad \forall n,m\geq 0,
\ee
since for $m,n\geq 0$ we have $\phi_m \phi_n = -\phi_n\phi_m$
unless $n=m=0$.
In particular,
$a_{n,n}=0$ for every $n\geq 0$.
The coefficients $\{a_{n,m}\}_{n,m\geq 0}$ are called the affine coordinates
of this BKP tau-function $\tau(\bm t)$.

By expanding the exponential $e^A$ and applying \eqref{eq-bf-schurq},
one easily sees that the tau-function $\tau$
admits an expansion by Schur $Q$-functions,
such that the coefficients are Pfaffians of its affine coordinates $\{a_{n,m}\}_{n,m\geq 0}$:
\be
\tau_A =
\sum_{\mu \in DP} (-1)^{\lceil l(\mu)/2 \rceil} \cdot
\Pf (a_{\mu_i,\mu_j})_{1\leq i,j\leq 2 \lceil l(\mu)/2 \rceil } \cdot Q_\mu(\bm t/2),
\ee
where $ l(\mu) $ is the length of
$\mu = (\mu_1> \cdots>\mu_{l(\mu)}>0) \in DP$,
and $\lceil \cdot \rceil$ is the ceiling function.
Here for $\mu\in DP$ with $l(\mu)$ odd,
we use the convention $\mu_{l(\mu)+1} = 0$.
For example, the first a few terms of $\tau_A$ are:
\begin{equation*}
\begin{split}
\tau_A =& 1+ \sum_{n>0}a_{0,n}\cdot Q_{(n)}(\bm t/2)
+\sum_{m>n>0} a_{n,m}\cdot Q_{(m,n)}(\bm t/2)\\
& + \sum_{m>n>l>0} (a_{n,m}a_{0,l} -a_{l,m}a_{0,n}+ a_{0,m}a_{l,n})Q_{(m,n,l)} (\bm t/2)\\
& +\sum_{m>n>l>k>0} (a_{n,m}a_{k,l} -a_{l,m}a_{k,n}+ a_{k,m}a_{l,n})Q_{(m,n,l,k)} (\bm t/2) +\cdots.
\end{split}
\end{equation*}

For an operator $A$ of the above form,
the following identity will be useful
(which can  be proved using the Baker-Campbell-Hausdorff formula,
see e.g. \cite[\S 7.3.4]{hb}):
\be
\label{eq-lem-conj}
e^{-A} \phi_i e^A
= \begin{cases}
\phi_i, & \text{ $i>0$;}\\
\phi_i+\sum_{m\geq 0} 2(-1)^i
(-a_{-i,m}+a_{-i,0} a_{0,m})\phi_m, & \text{ $i\leq 0$}.
\end{cases}
\ee

\section{Isotropic Sato Grassmannian and Kac-Schwarz Operators}
\label{sec-isotr-Gr}

It is known that there is a one-to-one correspondence between
the set of all tau-functions of BKP hierarchy (up to a constant)
and the isotropic Sato Grassmannian.
Similar to the case of KP hierarchy,
we have three equivalent descriptions for a BKP tau-function,
see the diagram \eqref{eq-intro-BKPdiagram} in the Introduction.
The arrow (II) is just the boson-fermion correspondence $\sigma_B$.
Now in this section,
we first review the construction of the isotropic Sato Grassmannian
and the Cartan map (I),
following \cite[\S 7]{hb}.
Then we describe a way to achieve (III) for elements in the big cell
of the isotropic Sato Grassmannian
(i.e., BKP tau-functions $\tau(\bm t)$ with  $\tau(0)=1$),
by recovering the BKP-affine coordinates from the wave function.

Inspired by this description of (III),
we propose a way to formulate the notion of Kac-Schwarz operators
for BKP tau-functions in terms of the associated BKP-wave function and BKP-affine coordinates.
We will also discuss the construction of quantum spectral curves
via principal specialization of BKP tau-functions.

\subsection{Isotropic Sato Grassmannian and affine coordinates}

In this subsection we recall some basics of the isotropic Sato Grassmannian.
See \cite[\S 7.3]{hb} for details.

Let $z$ be a formal variable,
and denote $\cH = \cH_+ \oplus \cH_-$,
where
\begin{equation*}
\cH_+ = \bC[z],
\qquad\qquad
\cH_- = z^{-1} \bC[[z^{-1}]].
\end{equation*}
Then $\{e_j = z^{-j-1}\}_{j\in \bZ}$ form a basis for $\cH$.
Let $\{\te^i\}_{i\in \bZ}$ be the dual basis for $\cH^*$,
and let $\cH_\phi \subset \cH\oplus \cH^*$ be the linear subspace
\be
\cH_\phi = \text{span}\{ e_i^0 \}_{i\in \bZ},
\ee
where
\be
e_i^0 = \frac{1}{\sqrt{2}} \big(e_i + (-1)^i \te^{-i} \big).
\ee
Let $Q_\phi : \cH_\phi \times \cH_\phi \to\bC$ be the nondegenerate symmetric bilinear form on $\cH_\phi$
given by $Q_\phi (e_i^0,e_j^0) = (-1)^{i+j} \cdot \delta_{i+j,0}$,
then the linear subspace
\be
\cH_\phi^0 = \text{span} \{e_i^0\}_{i<0} \subset \cH_\phi
\ee
is maximally isotropic with respect to $Q_\phi$.

Now let $O(\cH_\phi,Q_\phi) \subset GL(\cH_\phi)$ be the orthogonal subgroup:
\be
O(\cH_\phi,Q_\phi) = \big\{ g\in GL(\cH_\phi) \big|
Q_\phi(gu,gv) = Q_\phi(u,v) ,\forall u,v \in \cH_\phi \big\}.
\ee
Then the isotropic Sato Grassmannian $Gr_{\cH_\phi^0}^0 (\cH_\phi)$ is the orbit of $\cH_\phi^0$
under the action of $O(\cH_\phi,Q_\phi)$.
Then for every $w^0 = \text{span}\{w_i\cdots\}_{i>0} \in Gr_{\cH_\phi^0}^0 (\cH_\phi)$,
one has $Q_\phi (w_i,w_j) = 0$,
and there exists an element $g_\phi \in O(\cH_\phi, Q_\phi)$ such that
\be
w_i = g_\phi (e_{-i}^0),
\qquad \forall i>0.
\ee

Given an element $w^0 \in Gr_{\cH_\phi^0}^0 (\cH_\phi)$ in the isotropic Sato Grassmannian,
one can associate an element $Ca(w^0)$ in the projectivization $\bP (\cF_B)$
of the fermionic Fock space in the following way.
For a vector
\be
w = \sum_{i\in \bZ} c_i e_i^0 \in \cH_\phi,
\ee
denote
\be
\Phi(w) = \sum_{ i \in \bZ } c_i \phi_i.
\ee
Then for any $w^0 = \text{span}\{w_i\}_{i>0} \in Gr_{\cH_\phi^0}^0 (\cH_\phi)$£¬
define
(see \cite[(7.3.43)]{hb} for details):
\be
Ca(w^0) = [ \prod_{i= 1}^\infty \Phi(w_i) |0\rangle ],
\ee
where $[\cdot]$ denotes the equivalence class under projectivization.
This gives a map
\be
Ca: \quad Gr_{\cH_\phi^0}^0 (\cH_\phi) \to \bP(\cF_B),
\quad w^0 \mapsto Ca(w^0),
\ee
which is called the Cartan map.
It is the B-type analogue of the Pl\"ucker map (given by the infinite wedge)
on the ordinary Sato Grassmannian.
It is also the infinite-dimensional version of the map introduced by Cartan in \cite{ca}.
The image of this Cartan map in $\bP(\cF_B)$ is the intersection of an infinite number of quadrics
whose defining relations are called the Cartan relations,
which are equivalent to the BKP Hirota bilinear relation \eqref{eq-BKP-Hirota}.

When the element $w^0$ is in the big cell of the isotropic Sato Grassmannian,
the Cartan map can be represented explicitly in terms of the BKP-affine coordinates.
In fact,
let $\{a_{n,m}\}_{n,m\geq 0}$ be a family of complex numbers satisfying $a_{n,m} = -a_{m,n}$,
and let $w^0 = \text{span}\{w_k\}_{k>0} \in Gr_{\cH_\phi^0}^0 (\cH_\phi)$ where:
\be
\label{eq-def-basiswk}
\begin{split}
w_k =& e_{-k}^0 + 2 (-1)^{k+1}
\sum_{j=0}^\infty ( a_{j,k} - a_{0,k}a_{j,0}) e_j^0.
\end{split}
\ee
Then one has (see \cite[Theorem 7.3.1]{hb}):
\begin{Theorem}
[\cite{hb}]
Let $\{w_k\}_{k>0}$ be given by \eqref{eq-def-basiswk},
and let $w^0 = \text{span}\{w_k\}_{k>0}$ be a point in $Gr_{\cH_\phi^0}^0 (\cH_\phi)$.
Then the image of $w^0 $ under the Cartan map is:
\be
Ca(w^0) = \Big[ \exp \Big( \sum_{n,m\geq 0} a_{n,m} \phi_m\phi_n \Big) |0\rangle \Big].
\ee
\end{Theorem}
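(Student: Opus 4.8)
The plan is to reduce the identity to the conjugation formula \eqref{eq-lem-conj}. Set $A=\sum_{n,m\ge 0}a_{n,m}\phi_m\phi_n$ as in \eqref{eq-bog-A}; since $A$ involves only creators, $e^{\pm A}$ are invertible operators on $\cF_B$ with $e^{-A}e^{A}=\mathrm{id}$. First I would compute the fermions $\Phi(w_k)$ from the defining formula \eqref{eq-def-basiswk}: for $k\ge 1$,
\begin{equation*}
\Phi(w_k)=\phi_{-k}+2(-1)^{k+1}\sum_{j\ge 0}\big(a_{j,k}-a_{0,k}a_{j,0}\big)\phi_j ,
\end{equation*}
and, using the antisymmetry \eqref{eq-bog-A-cond}, rewrite the coefficient of $\phi_j$ as $2(-1)^{k}\big(a_{k,j}+a_{k,0}a_{0,j}\big)$. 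On the other hand, applying \eqref{eq-lem-conj} with $A$ replaced by $-A$ (i.e. $a_{n,m}\mapsto -a_{n,m}$) gives, for $i=-k<0$,
\begin{equation*}
e^{A}\phi_{-k}e^{-A}=\phi_{-k}+\sum_{m\ge 0}2(-1)^{k}\big(a_{k,m}+a_{k,0}a_{0,m}\big)\phi_m .
\end{equation*}
Comparing term by term yields the key identity $\Phi(w_k)=e^{A}\phi_{-k}e^{-A}$ for every $k\ge 1$.

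With this in hand I would avoid wrestling with the infinite product directly and argue by a uniqueness principle instead. Because $\phi_{-k}|0\rangle=0$ for $k\ge 1$, the identity gives $\Phi(w_k)\big(e^{A}|0\rangle\big)=e^{A}\phi_{-k}e^{-A}e^{A}|0\rangle=e^{A}\phi_{-k}|0\rangle=0$, so the nonzero vector $e^{A}|0\rangle\in\cF_B^0$ (its $|0\rangle$-component equals $1$) lies in the common kernel $K:=\bigcap_{k\ge1}\ker\Phi(w_k)$. On the other hand, $Ca(w^0)$ also lies in $K$: this is immediate from $Ca(w^0)=\big[\prod_{i\ge1}\Phi(w_i)|0\rangle\big]$ once one notes that the isotropy of $w^0$ forces $\Phi(w_k)^2=0$ and $\Phi(w_k)\Phi(w_l)=-\Phi(w_l)\Phi(w_k)$, so moving any $\Phi(w_k)$ to the front of the product annihilates it; and for $w^0$ in the big cell $Ca(w^0)$ is a well-defined nonzero class in $\bP(\cF_B^0)$. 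Hence both $e^{A}|0\rangle$ and $Ca(w^0)$ lie on the line $K\cap\cF_B^0$.

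The last thing to check is that this line is one-dimensional. For the reference point $\cH_\phi^0$ I would verify this by hand: if $|v\rangle=\sum_{\mu\in DP}c_\mu|\mu\rangle\in\cF_B^0$ is killed by all $\phi_{-k}$, $k\ge1$, and $|v\rangle\notin\bC|0\rangle$, let $k$ be the largest part occurring among the $\mu$ with $c_\mu\ne0$; then by \eqref{eq-anticomm} the vector $\phi_{-k}|v\rangle$ is a nontrivial combination of the linearly independent states obtained by deleting the part $k$ from those $\mu$ that contain it, contradicting $\phi_{-k}|v\rangle=0$. The general case reduces to this one: $w^0=g_\phi(\cH_\phi^0)$ for some $g_\phi\in O(\cH_\phi,Q_\phi)$, and the spin representation realizing the Cartan map intertwines this action (conjugating $\phi_{-k}$ into $\Phi(w_k)$) and respects the $\bZ/2$-grading of $\cF_B$, so $K\cap\cF_B^0$ is the image of $\bC|0\rangle$. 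Therefore $e^{A}|0\rangle$ and $Ca(w^0)$ are proportional, which is the assertion.

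I expect the main obstacle to be the sign bookkeeping in the first step — tracking all powers of $-1$ and distinguishing $a_{0,k}a_{j,0}$ from $a_{k,0}a_{0,j}$ — together with the care needed to make the infinite product defining $Ca$ rigorous; the big-cell hypothesis, namely that each $w_k$ equals $e_{-k}^0$ modulo $\mathrm{span}\{e^0_j\}_{j\ge0}$, is exactly what guarantees that the product stabilizes coefficientwise on every strict partition, so that the relations $\Phi(w_k)\,Ca(w^0)=0$ and $\dim(K\cap\cF_B^0)=1$ hold as stated. Alternatively, one can dispense with the uniqueness argument: the identity $\Phi(w_k)=e^A\phi_{-k}e^{-A}$ makes the product telescope, $\prod_{k\ge1}\Phi(w_k)|0\rangle=e^{A}\big(\prod_{k\ge1}\phi_{-k}\big)e^{-A}|0\rangle$, and since $e^{-A}$ is built from creators only one checks directly that $\big(\prod_{k\ge1}\phi_{-k}\big)e^{-A}|0\rangle$ is a nonzero multiple of $|0\rangle$.
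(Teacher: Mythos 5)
The paper does not prove this statement: it is quoted as Theorem~7.3.1 of \cite{hb}, so there is no in-text argument to compare against. Your proposal is essentially the standard ``pure spinor'' proof, and its core is correct. The sign bookkeeping in the first step checks out: with $a_{j,k}=-a_{k,j}$, $a_{0,k}=-a_{k,0}$, $a_{j,0}=-a_{0,j}$ one indeed gets $2(-1)^{k+1}(a_{j,k}-a_{0,k}a_{j,0})=2(-1)^{k}(a_{k,j}+a_{k,0}a_{0,j})$, which matches \eqref{eq-lem-conj} with $A$ replaced by $-A$, so $\Phi(w_k)=e^{A}\phi_{-k}e^{-A}$. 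Note that this identity also gives you the reduction step of the uniqueness argument for free: since $e^{A}$ is an invertible, parity-preserving operator, $\bigcap_{k\ge1}\ker\Phi(w_k)=e^{A}\bigl(\bigcap_{k\ge1}\ker\phi_{-k}\bigr)$, so there is no need to invoke a general intertwining property of the spin representation for $g_\phi$.

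Two points need repair. First, your one-dimensionality argument for the reference point chooses ``the largest part occurring among the $\mu$ with $c_\mu\neq0$,'' which need not exist because $\cF_B$ as defined here allows infinite formal sums. The fix is easy: for any fixed nonempty $\mu_0$ with largest part $k_0$, the coefficient of $|\mu_0\setminus\{k_0\}\rangle$ in $\phi_{-k_0}|v\rangle$ equals $\pm c_{\mu_0}$, because $\mu\mapsto\mu\setminus\{k_0\}$ is injective on strict partitions containing $k_0$; hence all $c_{\mu_0}$ vanish. Second, the closing ``alternative'' does not work as stated: every term of $e^{-A}|0\rangle$ contains only finitely many creators, so a literal semi-infinite product of annihilators $\prod_{k\ge1}\phi_{-k}$ applied to it gives $0$, not a nonzero multiple of $|0\rangle$; the product in \cite{hb}, (7.3.43), is a regularized one. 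Your main argument is preferable precisely because it needs only two qualitative inputs about the Cartan map --- that $Ca(w^0)$ is a well-defined nonzero class in $\bP(\cF_B^0)$ for $w^0$ in the big cell, and that it is annihilated by every $\Phi(w_k)$ --- and these should be imported explicitly from \cite{hb} rather than re-derived by manipulating the infinite product.
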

The isotropy condition for $w^0$ is equivalent to the relation $a_{n,m}=-a_{m,n}$.

\begin{Remark}
Here our notation $\{a_{n,m}\}$ and the notation $\{A_{n,m}\}$ in \cite[\S 7]{hb}
are related by $a_{n,m} = -\half A_{n,m}$.
\end{Remark}

\subsection{Recovering BKP-affine coordinates from wave function}
\label{sec-wave-to-affine}

Let $\tau(\bm t)$ be a tau-function of the BKP hierarchy satisfying $\tau(0)=1$,
then its BKP-affine coordinates $\{a_{n,m}\}_{n,m\geq 0}$
are the coefficients of the $Q_{(n)}(\bm t)$ and $Q_{(n,m)}(\bm t)$
in the Schur $Q$-expansion.
However,
in general it is hard to write down the explicit formula for the Schur $Q$-expansion
since one needs to compute of characters of projective representations.
In this subsection,
we will discuss another strategy to compute the BKP-affine coordinates --
we show that $\{a_{n,m}\}$ can be recovered from the wave function associated to $\tau(\bm t)$.
This method will lead to the notion of Kac-Schwarz operator of type $B$
(see the next subsection).
In what follows,
we will always consider BKP tau-functions $\tau(\bm t)$ satisfying the condition $\tau(0)=1$.

The BKP-wave function associated to a tau-function $\tau (\bm t)$ of the BKP hierarchy
is defined by Sato's formula \cite{djkm}:
\be
w_B (\bm t;z) = X_B(\bm t;z)\tau(\bm t) / \tau(\bm t),
\ee
where $X_B$ is the vertex operator:
\be
X_B(\bm t;z) = e^{\xi(\bm t;z)} e^{-\xi(\tpd,z^{-1})},
\ee
and
\begin{equation*}
\xi(\bm t,z) = \sum_{k>0:\text{ odd}} t_k z^{k}
= t_1 z+ t_3 z^3 +t_5 z^5+ \cdots,
\end{equation*}
and $\tpd = (2\pd_{t_1}, \frac{2}{3} \pd_{t_3},\frac{2}{5}\pd_{t_5},\cdots)$.
Or more explicitly,
\be
\label{eq-SatoFormula}
w_B (\bm t;z) = \exp\Big(\sum_{k=0}^\infty t_{2k+1} z^{2k+1}\Big)\cdot
\frac{\tau (t_1 - \frac{2}{z}, t_3 -\frac{2}{3z^3}, t_5-\frac{2}{5z^5},\cdots)}{\tau (t_1,t_3,t_5,\cdots)}.
\ee
Recall that the vertex operator $X_B$ is in fact the fermionic field $\phi(z)$ under
the boson-fermion correspondence of type $B$
(see \S \ref{sec-pre-bf}):
\begin{equation*}
\sigma_B(\phi(z) |U\rangle) = \frac{1}{\sqrt{2}} w \cdot X_B(\bm t,z) \sigma_B(|U\rangle),
\end{equation*}
where $w$ is a formal variable with $w^2=1$.
Then:
\be
\label{eq-wave-fermionic}
\begin{split}
w_B (\bm t;z) =& \sqrt{2} w^{-1} \cdot \sigma_B(\phi(z)e^A |0\rangle) /\tau(\bm t)\\
=&
\sqrt{2}  \cdot \langle 1 | \Gamma_+^B(\bm t) \phi(z)e^A |0\rangle /\tau(\bm t)\\
=& 2 \langle 0| \phi_0 \Gamma_+^B(\bm t) \phi(z)e^A |0\rangle /\tau(\bm t),
\end{split}
\ee
where
\begin{equation*}
A= \sum_{n,m\geq 0} a_{n,m} \phi_m\phi_n,
\end{equation*}
and $\{a_{n,m}\}_{n,m}$ are the BKP-affine coordinates of $\tau(\bm t)$,
satisfying $a_{n,m} = -a_{m,n}$.

Similar to the case of KP hierarchy (see e.g. \cite{av, zhou1}),
the BKP-affine coordinates can be recovered from the wave function in the following way:
\begin{Theorem}
\label{thm-wave&affine}
We have:
\be
\label{eq-thmwaveaffine}
\text{span} \big\{\pd_{t_1}^kw_B(0;z) \big\}_{k\geq 0} =
\text{span} \big\{ z^k + a_{k,0}+\sum_{i=1}^\infty 2(-1)^{i} a_{k,i} z^{-i} \big\}_{k\geq 0}.
\ee
as sub-vector spaces of $\cH = \bC[z]\oplus z^{-1} \bC [[z^{-1}]]$.
Or equivalently,
\be
\text{span} \big\{\pd_{t_1}^kw_B(0;z) \big\}_{k\geq 0} =
\text{span} \big\{ z^k + \sum_{i=1}^\infty 2(-1)^{i} (a_{k,i} - a_{k,0}a_{0,i})
z^{-i} \big\}_{k\geq 0}.
\ee
\end{Theorem}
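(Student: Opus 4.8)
The plan is to start from the fermionic expression \eqref{eq-wave-fermionic} for the wave function and evaluate everything at $\bm t = 0$, then differentiate in $t_1$. Setting $\bm t=0$ kills $\Gamma_+^B(\bm t)$, so one is left with computing the fermionic matrix element
\be
w_B(0;z) = 2\langle 0|\phi_0\,\phi(z)\,e^A|0\rangle,
\ee
using $\tau(0)=1$. Since $A$ involves only creators $\phi_m$ with $m\geq 0$, I would move $e^A$ to the left past $\phi(z)$ using the conjugation identity \eqref{eq-lem-conj}: writing $\phi(z)=\sum_i\phi_i z^i$, the negative-index modes $\phi_i$ ($i\leq 0$) pick up the tail $\sum_{m\geq 0}2(-1)^i(-a_{-i,m}+a_{-i,0}a_{0,m})\phi_m$, while the positive-index modes are unchanged. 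After conjugating, $e^A$ acts on $|0\rangle$ on the right (giving $|0\rangle$ since creators… wait, $e^A|0\rangle$ is not $|0\rangle$) — more precisely I would instead push $e^A$ to the \emph{left} where it meets $\langle 0|\phi_0$; since $A$ is a sum of $\phi_m\phi_n$ with $m,n\geq 0$, we have $\langle 0|\phi_0 e^A = \langle 0|\phi_0$ up to terms annihilated on the right, because $\langle 0|\phi_i = 0$ for $i>0$ and the only surviving Wick contractions are the ones already accounted for. The cleanest route is: $\langle 0|\phi_0\phi(z)e^A|0\rangle = \langle 0|\phi_0\,(e^A e^{-A}\phi(z)e^A)|0\rangle = \langle 0|\phi_0 e^A\,(e^{-A}\phi(z)e^A)|0\rangle$, and $\langle 0|\phi_0 e^A = \langle 0|\phi_0$ since $A|0\rangle$ expansions terminate against $\langle 0|\phi_0$ by \eqref{eq-basic-VEV} — I would verify this bracket collapse carefully. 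Then $e^{-A}\phi(z)e^A|0\rangle$ is computed termwise via \eqref{eq-lem-conj}, and pairing against $\langle 0|\phi_0$ extracts exactly the $\phi_0$-coefficient using $\langle 0|\phi_0\phi_0|0\rangle = \tfrac12$ and $\langle 0|\phi_0\phi_m|0\rangle = 0$ for $m>0$.

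Carrying this out, the contribution of the mode $\phi_i z^i$ of $\phi(z)$ is: for $i>0$, nothing (it gives $\langle 0|\phi_0\phi_i|0\rangle=0$ after passing $e^A$, which only adds higher creators); for $i=-k\leq 0$, the term $\phi_{-k}+\sum_{m\geq 0}2(-1)^k(-a_{k,m}+a_{k,0}a_{0,m})\phi_m$ paired with $\langle 0|\phi_0\cdot\frac12\rangle^{-1}$… concretely, multiplying by $2$ and taking the $\phi_0$-coefficient, the $z^{-k}$ coefficient of $w_B(0;z)$ becomes $2(-1)^k(-a_{k,0}+a_{k,0}a_{0,0})\cdot\tfrac12\cdot 2 = \ldots$ — here $a_{0,0}=0$, so I would assemble the full series and find, with signs tracked,
\be
w_B(0;z) = 1 + \sum_{i\geq 1} 2(-1)^i a_{0,i} z^{-i} = \Phi_0^B(z),
\ee
matching the fermionic $1$-point function $\langle\phi_0\phi(z)e^A\rangle$ displayed in the introduction. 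This is the $k=0$ basis vector. For general $k$, I note that $\pd_{t_1}$ corresponds under $\sigma_B$ to the bosonic operator $H_1$ acting on the fermionic side (by \eqref{eq-bfcor-boson}), so $\pd_{t_1}^k w_B(0;z) = 2\langle 0|\phi_0\,H_1^k\,\phi(z)e^A|0\rangle$ at $\bm t=0$; alternatively and more simply, $\pd_{t_1}^k$ acting on $\Gamma_+^B(\bm t)\phi(z)$ at $\bm t=0$ inserts $H_1^k$. Since $[H_1,\phi(z)]$ shifts modes, I would instead use the dual description: $H_1^k\phi(z)|0\rangle$ hits the fermions $\phi_{-k'}$, and the upshot (as in the KP case of \cite{av,zhou1}) is that $\pd_{t_1}^k w_B(0;z)$ is, modulo lower-order basis vectors, $(-1)^k$ times $\langle\phi_{-k}\phi(z)e^A\rangle = (-1)^k(z^k + a_{k,0}+\sum_{i\geq 1}2(-1)^i a_{k,i}z^{-i})$. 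Thus the span of $\{\pd_{t_1}^k w_B(0;z)\}_{k\geq 0}$ equals the span of $\{z^k+a_{k,0}+\sum_{i\geq 1}2(-1)^i a_{k,i}z^{-i}\}_{k\geq 0}$, which is the first claimed equality.

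For the equivalent second form, I observe that the vector $z^k+a_{k,0}+\sum_i 2(-1)^i a_{k,i}z^{-i}$ differs from $z^k+\sum_i 2(-1)^i(a_{k,i}-a_{k,0}a_{0,i})z^{-i}$ by exactly $a_{k,0}\cdot\big(1+\sum_i 2(-1)^i a_{0,i}z^{-i}\big) = a_{k,0}\,\Phi_0^B(z)$, i.e.\ by a multiple of the $k=0$ basis vector. Hence the two spanning sets generate the same subspace of $\cH$, and in fact the second set is precisely the triangular basis $\{w_k\}_{k>0}$ of \eqref{eq-def-basiswk} (after identifying $e_j^0\leftrightarrow z^{-j-1}$-type coordinates) together with $w_0=\Phi_0^B$, so this also re-confirms that $U_\tau$ is the point $w^0$ in the isotropic Sato Grassmannian attached to $\tau$ via the Cartan map. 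The main obstacle I anticipate is bookkeeping: getting the signs $(-1)^i$, the factors of $2$, and the collapse $\langle 0|\phi_0 e^A = \langle 0|\phi_0$ all exactly right, and justifying the claim that $\pd_{t_1}^k w_B(0;z)$ produces $\langle\phi_{-k}\phi(z)e^A\rangle$ only up to lower-order terms (which is harmless for the span statement but must be stated carefully) — this is where I would spend the most care, mirroring the KP argument in \cite{av, zhou1} but adapting to neutral fermions and the type-$B$ pairing \eqref{eq-basic-VEV}.
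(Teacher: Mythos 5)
Your overall strategy is the same as the paper's: compute $w_B(0;z)=2\langle\phi_0\phi(z)e^A\rangle=\Phi_0^B(z)$, identify $\pd_{t_1}^k w_B(0;z)$ with an insertion of $H_1^k$ into the fermionic correlator, match the resulting one-point functions $\langle\phi_{-k}\phi(z)e^A\rangle$ with the stated basis vectors, and pass between the two forms of the basis by subtracting $a_{k,0}\Phi_0^B$. Those parts, including the collapse $\langle 0|\phi_0 e^A=\langle 0|\phi_0$ against one-fermion creator states and the identity relating the two spanning sets, are correct.

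The step you leave unargued is, however, exactly the nontrivial one, and the phrase ``modulo lower-order basis vectors'' conceals two distinct issues. First, $w_B$ is a quotient by $\tau(\bm t)$, so $\pd_{t_1}^k w_B(0;z)$ is not $2\langle\phi_0 H_1^k\phi(z)e^A\rangle$ but a Leibniz-rule combination of the terms $\langle\phi_0 H_1^{k-i}\phi(z)e^A\rangle$ with coefficients $\pd_{t_1}^i(1/\tau)|_{\bm t=0}$; this is benign for a span statement, but it has to be said, since it is what reduces the claim to $\mathrm{span}\{\langle\phi_0 H_1^k\phi(z)e^A\rangle\}=\mathrm{span}\{\langle\phi_{-k}\phi(z)e^A\rangle\}$. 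Second, and more seriously, $\langle 0|\phi_0 H_1^k$ is not $\pm\tfrac12\langle 0|\phi_{-k}$ plus multiples of $\langle 0|\phi_{-m}$ with $m<k$: it contains genuinely multi-fermion terms $\langle 0|\phi_{-n_1}\cdots\phi_{-n_j}$ with $j\geq 2$ and $n_1+\cdots+n_j=k$ (for instance $\langle 0|\phi_0H_1^3=\tfrac12\langle 0|\phi_{-2}\phi_0\phi_{-1}+\tfrac12\langle 0|\phi_{-3}$). That the corresponding correlators $\langle\phi_{-n_1}\cdots\phi_{-n_j}\phi(z)e^A\rangle$ lie in $\mathrm{span}\{\langle\phi_{-m}\phi(z)e^A\rangle\}_{m\leq k}$ is precisely what needs proving; the paper does it by inserting $e^Ae^{-A}$ between all factors, using \eqref{eq-lem-conj} to write each $e^{-A}\phi_{-n_i}e^A$ as a linear combination of fermions, and then applying Wick's theorem, so that the pair contractions $\langle\phi_{-n_p}\phi_{-n_q}e^A\rangle$ become scalar coefficients multiplying the surviving one-point functions $\langle\phi_{-n_i}\phi(z)e^A\rangle$. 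Without this Wick reduction your triangularity claim is an assertion rather than a proof; with it supplied, your argument coincides with the paper's.
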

\begin{Remark}
Knowing either one of the above two bases is actually equivalent to knowing all affine coordinates $\{a_{n,m}\}_{n,m\geq 0}$,
since one has $a_{n,m} = -a_{m,n}$.
\end{Remark}
\begin{proof}
Notice that by expanding the exponential $e^A$ directly and using the identity \eqref{eq-basic-VEV},
one can easily check:
\begin{equation*}
\langle \phi_{0} \phi(z) e^A \rangle =
\half + \sum_{i=1}^\infty (-1)^{i} a_{0,i} z^{-i},
\end{equation*}
and for every $k\geq 1$,
\begin{equation*}
\langle \phi_{-k} \phi(z) e^A \rangle = (-1)^k\cdot
\Big( z^k + a_{k,0}+\sum_{i=1}^\infty 2(-1)^{i} a_{k,i} z^{-i} \Big).
\end{equation*}
Thus in order to prove \eqref{eq-thmwaveaffine},
we only need to show:
\be
\label{eq-thmwaveaffine-pf}
\text{span}\{\pd_{t_1}^kw_B(0;z)\}_{k\geq 0} =
\text{span} \{\langle \phi_{-k} \phi(z) e^A \rangle \}_{k\geq 0}.
\ee
Now let us prove \eqref{eq-thmwaveaffine-pf}.
By \eqref{eq-wave-fermionic} we have:
\begin{equation*}
\begin{split}
\pd_{t_1}^k w_B( 0 ; z)
=& \pd_{t_1}^k \Big(
\frac{2 \langle \phi_0 e^{t_1 H_1} \phi(z) e^A \rangle}{ \tau(\bm t)} \Big)
\Big|_{\bm t=0} \\
=& \sum_{i=0}^k 2\binom{k}{i}
\pd_{t_1}^{k-i} \big(
\langle \phi_0 e^{t_1 H_1} \phi(z) e^A \rangle \big) \big|_{\bm t=0}
\cdot \pd_{t_1}^i \big( \frac{1}{\tau(\bm t)} \big) \big|_{\bm t=0} \\
=& \sum_{i=0}^k 2\binom{k}{i}
\cdot \langle \phi_0 H_1^{k-i} \phi(z) e^A \rangle
\cdot \pd_{t_1}^i \big( \frac{1}{\tau(\bm t)} \big) \big|_{\bm t=0}\\
=& 2 \langle \phi_0 H_1^k \phi(z) e^A \rangle
+ \sum_{i=1}^k 2\binom{k}{i}
\langle \phi_0 H_1^{k-i} \phi(z) e^A \rangle
\cdot \pd_{t_1}^i \big( \frac{1}{\tau(\bm t)} \big) \big|_{\bm t=0},
\end{split}
\end{equation*}
where $H_1$ is the bosonic operator defined by \eqref{eq-def-boson}.
Then we see that:
\begin{equation*}
\text{span} \big\{\pd_{t_1}^kw_B(0;z)\big\}_{k\geq 0}
=\text{span} \big\{\langle \phi_0 H_1^k \phi(z) e^A \rangle \big\}_{k\geq 0},
\end{equation*}
and thus it is sufficient to prove:
\be
\label{eq-thmwaveaffine-pf-2}
\text{span} \big\{\langle \phi_0 H_1^k \phi(z) e^A \rangle \big\}_{k\geq 0}
= \text{span} \big\{\langle \phi_{-k} \phi(z) e^A \rangle \big\}_{k\geq 0}.
\ee
Notice that:
\begin{equation*}
\begin{split}
H_{1} =&
\sum_{i=0}^\infty (-1)^{i+1} \phi_i \phi_{-i-1}\\
= &- \phi_0\phi_{-1} +\phi_1\phi_{-2} -\phi_2\phi_{-3} +\phi_3\phi_{-4}-\cdots,
\end{split}
\end{equation*}
and $\langle 0|\phi_n =0$ for $n>0$, therefore we can compute:
\begin{equation*}
\begin{split}
 \langle 0| \phi_0 H_1 =& -\langle 0| \phi_0^2 \phi_{-1} = -\half \langle 0| \phi_{-1},\\
 \langle 0 | \phi_0 H_1^2 =&
\sum_{i,j\geq 0} (-1)^{i+j} \langle 0 | \phi_0\phi_i\phi_{-i-1}\phi_j\phi_{-j-1}
= \half \langle 0| \phi_{-2},\\
 \langle 0 | \phi_0 H_1^3 = &
\sum_{i,j,k\geq 0} (-1)^{i+j+k} \langle 0 | \phi_0\phi_i\phi_{-i-1}\phi_j\phi_{-j-1}\phi_k\phi_{-k-1}\\
=& \half \langle 0 | \phi_{-2}\phi_0\phi_{-1} + \half \langle 0|  \phi_{-3}.\\
\end{split}
\end{equation*}
Inductively,
for every $k \geq 1$ one has:
\begin{equation*}
\langle 0| \phi_0 H_1^k =
\pm \half \langle 0| \phi_{-k} +\sum_{j\geq 2}
\sum_{\substack{n_1>\cdots>n_j\geq 0\\ n_1+\cdots +n_j =k}}
c_{n_1,\cdots,n_j} \langle 0| \phi_{-n_1}\cdots \phi_{n_j}
\end{equation*}
for some constants $c_{n_1,\cdots,n_j}$.
Thus to prove \eqref{eq-thmwaveaffine-pf-2},
it suffices to show that:
\be
\label{eq-thmwaveaffine-pf-3}
\langle \phi_{-n_1}\cdots \phi_{-n_j} \phi(z) e^A \rangle \in
\text{span} \big\{ \langle \phi_{-m} \phi(z) e^A \rangle \big\}_{m=0}^{k},
\ee
for every sequence $n_1>\cdots>n_j\geq 0$ with $n_1+\cdots +n_j =k$.
Now let us prove \eqref{eq-thmwaveaffine-pf-3}.
Notice that $\langle 0| = \langle 0| e^A$,
and thus:
\be
\label{eq-thmwaveaffine-pf-4}
\langle \phi_{-n_1}\cdots \phi_{-n_j} \phi(z) e^A \rangle
= \langle (e^{-A}\phi_{-n_1}e^A)
\cdots (e^{-A}\phi_{-n_j} e^A) (e^{-A} \phi(z) e^A) \rangle.
\ee
By \eqref{eq-lem-conj} we know that $e^{-A}\phi_{n} e^A$ is a linear summation of
the fermions $\{\phi_i\}_{i\in \bZ}$ for every $n\in \bZ$,
and thus one can apply Wick's Theorem to compute the right-hand side of \eqref{eq-thmwaveaffine-pf-4}.
Then one sees that \eqref{eq-thmwaveaffine-pf-4} equals to a linear combination of:
\begin{equation*}
\langle (e^{-A}\phi_{-n_i}e^A) (e^{-A}\phi(z)e^A) \rangle
=\langle \phi_{-n_i} \phi(z)e^A \rangle ,
\qquad i=1,2,\cdots,j,
\end{equation*}
where the coefficients are certain products of the numbers
\begin{equation*}
\langle (e^{-A}\phi_{-n_p}e^A) (e^{-A}\phi_{-n_q}e^A) \rangle
=\langle \phi_{-n_p}\phi_{-n_q}e^A \rangle.
\end{equation*}
Then \eqref{eq-thmwaveaffine-pf-3} is proved by induction on $k$.
\end{proof}

\subsection{Kac-Schwarz operators of type $B$}

Now using the above conclusion,
we can formulate the notion of Kac-Schwarz operators and quantum spectral curves of type $B$
for a BKP tau-function $\tau(\bm t)$ with $\tau(0)=1$ in terms of its
BKP-wave function and affine coordinates.

Let $\tau(\bm t)$ be such a tau-function,
and let $\{a_{n,m}\}_{n,m\geq 0}$ be its BKP-affine coordinates.
In what follows,
we will denote:
\be
\tilde a_{n,m} = a_{n,m} -a_{n,0}a_{0,m},
\qquad \forall n,m\geq 0,
\ee
and call $\{\tilde a_{n,m}\}_{n,m\geq 0}$ the modified affine coordinates of $\tau(\bm t)$.
Then the two sets of coordinates $\{a_{n,m}\}_{n,m\geq 0}$ and $\{\tilde a_{n,m}\}_{n,m\geq 0}$
are uniquely determined by each other.
The modified affine coordinates satisfy the following constraints:
\be
\label{eq-mofif-condition}
\begin{split}
&\tilde a_{0,n} = -\tilde a_{n,0};\\
&\tilde a_{n,m} + \tilde a_{m,n} = -2\tilde a_{n,0}\tilde a_{0,m}.
\end{split}
\ee
The discussions in the previous subsection tell us that
we can canonically associate a linear subspace
\begin{equation*}
U_\tau \subset \cH = \bC[z]\oplus z^{-1}\bC[[z^{-1}]]
\end{equation*}
to $\tau(\bm t)$ in the following way
(see Theorem \ref{thm-wave&affine}):
\be
U_\tau =
\text{span} \big\{\pd_{t_1}^kw_B(0;z) \big\}_{k\geq 0} =
\text{span} \big\{ \Phi_k^B(z)  \big\}_{k\geq 0}.
\ee
where:
\be
\Phi_k^B(z) =
z^k + \sum_{i=1}^\infty 2(-1)^{i} \tilde a_{k,i}
z^{-i},
\qquad k\geq 0.
\ee
\begin{Remark}
The subspace $ U_\tau \subset \cH$ is actually a point in the big cell $Gr_{(0)}$
of the usual Sato Grassmannian
since it is clear that $\pi_+ | _{U_\tau}$ is an isomorphism,
where $\pi_+$ is the natural projection
\begin{equation*}
\pi_+ : \cH \to \cH_+ = \bC[z].
\end{equation*}
The KP-affine coordinates (see \cite{zhou1} for notations) of this point is given by:
\begin{equation*}
a_{n,m}^{KP} = 2(-1)^{m+1} \cdot \tilde a_{n,m+1}.
\end{equation*}
This defines an embedding of $Gr_{\cH_\phi^0}^0 (\cH_\phi)$ into $Gr_{(0)}$.
\end{Remark}

In what follows,
we will consider the notion of Kac-Schwarz operators for the isotropic Sato Grassmannian
in the following sense.
We say that an operator $P$ (acting on formal Laurent series in $z$) is a Kac-Schwarz operator of type $B$
for the tau-function $\tau(\bm t)$,
if it satisfies:
\be
P(U_\tau) \subset U_\tau.
\ee
Similar to the case of KP hierarchy and the usual Sato Grassmannian
(see \cite{zhou6, al1, al2} for some examples),
here we are especially interested in those BKP tau-functions for which
there are two Kac-Schwarz operators $(P,Q)$ of type $B$,
such that:
\be
\label{eq-KS-conditionPQ}
\begin{split}
& P(\Phi_0^B) = 0;\\
& Q(\Phi_k^B) - c_{k+1}\cdot \Phi_{k+1}^B \in \text{span}\{\Phi_0^B,\Phi_1^B,\cdots,\Phi_{k}^B\}.
\end{split}
\ee
where $\{c_{k+1}\}$ are some nonzero constants.
Moreover,
we hope that they satisfy the canonical commutation relation:
\be
[P,Q]=1.
\ee
If the equation $P(\Phi_0^B)=0$ has a unique solution
$\Phi_0^B \in 1+ z^{-1}\bC[[z^{-1}]]$,
then $\{\tilde a_{n,m}\}$ and hence the tau-function $\tau(\bm t)$ are uniquely determined,
since $\Phi_k^B(z)$ for every $k>0$ can be recursively computed by applying $Q$
to $\Phi_{k-1}^B(z)$.

\subsection{Quantum spectral curves}

Similar to the case of KP tau-functions,
we call an operator $\Delta$ (acting on formal Laurent series in $z$)
a quantum spectral curve for a BKP tau-function $\tau$,
if it annihilates the first basis vector for $U_\tau$:
\be
\label{eq-def-QSC-B}
\Delta (\Phi_0^B) = 0.
\ee
In particular,
if there exists a pair of Kac-Schwarz operators $(P,Q)$ satisfying \eqref{eq-KS-conditionPQ},
then $P$  is a quantum spectral curve of type $B$.
In the rest of this subsection,
we explain that this definition of quantum spectral curves actually match with
the construction of Gukov-Su{\l}kowski's \cite{gs}.

First we rewrite the definition \eqref{eq-def-QSC-B} in terms of the principal specialization of $\tau(\bm t)$.
Notice that the first basis vector $\Phi_0^B(z)$ is given by:
\begin{equation*}
\Phi_0^B(z) = 1+\sum_{i=0}^\infty 2(-1)^i a_{0,i} z^{-i}
= 2\langle \phi_0 \phi(z) e^A \rangle =
w_B(0;z),
\end{equation*}
then \eqref{eq-def-QSC-B} is equivalent to:
\be
\Delta (w_B(0;z)) =0.
\ee
Moreover,
by Sato's formula \eqref{eq-SatoFormula} we know that $w_B(0;z)$ is given by
the principal specialization $t_k = -\frac{2}{kz^k}$ (for every odd $k>0$) of the tau-function $\tau(\bm t)$,
therefore the definition \eqref{eq-def-QSC-B} is equivalent to:
\be
\Delta \Big( \tau(-\frac{2}{z}, -\frac{2}{3z^3}, -\frac{2}{5z^5}, \cdots)\Big) =0.
\ee

Now let $\cC$ be a plane curve.
Similar to the case of KP tau-functions
(see \S \ref{sec-intro-QSC} for a brief review),
if a BKP tau-function $\tau(\bm t)$ is reconstructed from
the E-O topological recursion on $\cC$,
then the B-A function constructed from the E-O invariants using Gukov-Su{\l}kowski's ansatz
(with a possible modification of degrees of $z$ in \eqref{eq-def-Sn},
since the BKP-time variables are indexed by odd numbers)
can be regarded as the principal specialization of $\tau(\bm t)$,
which coincides with the first basis vector in $U_\tau$.
In this case,
one may expect to find a quantum spectral curve $\Delta(\Phi_0^B) = 0$
such that the semi-classical limit of the operator $\Delta$
recovers the classical spectral curve $\cC$,
according to Gukov-Su{\l}kowski's conjecture.

In the next two sections,
we will consider the example of spin single Hurwitz numbers with completed cycles.
We construct a pair of Kac-Schwarz operators $(P,Q)$ of type $B$ for the corresponding BKP tau-function,
such that the operator $P$ also serves as the quantum spectral curve.

\section{BKP-Affine Coordinates for Spin Single Hurwitz Numbers}
\label{sec-sH-affine}

It is known that the generating series of spin single Hurwitz numbers with completed cycles
is a tau-function of the BKP hierarchy \cite{le, gkl}.
In this section,
we first review some facts about spin Hurwitz numbers,
and then compute the BKP-affine coordinates of the associated tau-function.

\subsection{Spin Hurwitz numbers with completed cycles}

In this subsection,
we recall the notion of spin Hurwitz numbers and
the fermionic representation of their generating series.

The ordinary Hurwitz numbers,
which count the numbers of branched covers between Riemann surfaces with specified ramification types,
were first introduced by Hurwitz in \cite{hur}.
In \cite{eop},
Eskin-Okounkov-Pandharipande introduced a spin structure (or theta characteristic) on the source,
and introduced a new type of Hurwitz numbers in this way.
These new Hurwitz numbers are called spin Hurwitz numbers.
Similar to the case of ordinary Hurwitz numbers \cite{pa, Ok1, ssz},
the generating series of disconnected spin Hurwitz numbers
are also controlled by some integrable hierarchies,
see e.g. \cite{le, gkl,  mmn, mmno}.
Denote by $h_{g;\mu}^{\bullet, r,\vartheta}$  the disconnected spin single Hurwitz number
with $(r+1)$-completed cycles (where $r$ is even),
of genus $g$ and ramification type $\mu$
(see \cite[\S 4]{gkl} for the detailed definition and computations via Gunningham's formula \cite{gu}).
And let
\be
\tau^{(r,\vartheta)}(\bm t) = \sum_{g;\mu}
2^{g-1} \beta^b   h_{g;\mu}^{\bullet, r,\vartheta} \cdot
 \frac{p^{|\mu|}}{|\mu|!} \cdot
\frac{ p_{\mu} }{ l(\mu)! }
\ee
be the generating series of all such disconnected spin single Hurwitz numbers,
where $p$ and $(p_1,p_2,\cdots)$ are some formal variable,
and
$p_\mu = p_{\mu_1} p_{\mu_2} \cdots p_{\mu_l}$,
where $p_n = n\cdot t_n$.
The summation in the right-hand side is over all odd partitions $\mu$
and all integers $g$ (genus).
The number $b$ of $(r+1)$-completed cycle ramification points
is determined by $g$ and $\mu$ via the Riemann-Hurwitz formula:
\be
\label{eq-R-H}
b= (2g-2 +l(\mu) +|\mu|) /r.
\ee
The generating series $\tau^{(r,\vartheta)}(\bm t)$ is known to be a tau-function of
the BKP hierarchy,
which can be represented as a fermionic vacuum expectation value as follows
(see \cite{le} and \cite[Theorem 6.20]{gkl}):
\be
\label{eq-tau-rtheta-1}
\tau^{(r,\vartheta)} =
\langle \Gamma_+ (\bm t)
 \exp\big(\beta\frac{\hF_{r+1}}{r+1}\big)
e^{pH_{-1}} \rangle,
\ee
where $\hF_{r+1}$ is the following operator on the fermionic Fock space $\cF_B$:
\be
\label{eq-tau-rtheta-2}
\hF_{r+1}
= \sum_{k>0} (-1)^k k^{r+1} :\phi_k\phi_{-k}:.
\ee

\subsection{Computation of BKP-affine coordinates}

In this subsection,
we compute the affine coordinates of the tau-function $\tau^{(r,\vartheta)} (\bm t)$.

Let $f:\bZ_{>0} \to \bC$ be a function defined on the set of positive integers.
Consider a tau-function of the BKP hierarchy of the following form:
\be
\label{eq-tau-diagonal}
\tau (\bm t) = \langle 0 | \Gamma_+^B(\bm t) e^\hf e^{p H_{-1}} |0\rangle,
\ee
where $p$ is a formal variable,
and
\be
\hf = \sum_{m=1}^\infty (-1)^m f(m) :\phi_m\phi_{-m}:
= \sum_{m=1}^\infty (-1)^m f(m) \phi_m \phi_{-m}.
\ee
Then we have:
\begin{Proposition}
The BKP-affine coordinates of $\tau(\bm t)$ are given by:
\be
\begin{split}
&a_{0,n} =  \frac{p^n}{ 2\cdot n!}\cdot e^{f(n)},\qquad n>0;\\
&a_{n,m} = \frac{p^{m+n}}{4\cdot m!\cdot n!} \cdot \frac{m-n}{m+n} \cdot e^{f(m)+f(n)},
\qquad m>n>0,
\end{split}
\ee
and $a_{n,m} = -a_{m,n}$ for every $n,m\geq 0$.
\end{Proposition}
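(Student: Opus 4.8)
The plan is to identify the vector $e^{\hf}e^{pH_{-1}}\vac$ as a point of the isotropic Sato Grassmannian and to read off its BKP-affine coordinates via the Cartan-map description recalled in \S\ref{sec-isotr-Gr} (formula \eqref{eq-def-basiswk} and the accompanying theorem of \cite{hb}). Write $g=e^{\hf}e^{pH_{-1}}$. Since the Bogoliubov representation $\tau=\sigma_B(e^A\vac)$ is unique and $\sigma_B$ is a linear isomorphism, \eqref{eq-tau-diagonal} gives $e^A\vac=g\vac$ exactly (and one checks $\tau(0)=\langle e^{\hf}e^{pH_{-1}}\rangle=\langle e^{pH_{-1}}\rangle=1$, e.g.\ via \eqref{eq-bfcor-boson} together with $\langle 0|e^{\hf}=\langle 0|$, so the affine coordinates are indeed defined). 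As $\phi_{-k}\vac=0$, the vector $g\vac$ is annihilated by every $g\phi_{-k}g^{-1}$ with $k\geq1$; since $g$ is invertible and $\{\phi_{-k}\}_{k\geq1}$ spans a maximal isotropic subspace, the operators $\{g\phi_{-k}g^{-1}\}_{k\geq1}$ span exactly the maximal isotropic subspace attached to $g\vac$. By \eqref{eq-def-basiswk} its reduced echelon basis has the form $\Phi(w_k)=\phi_{-k}+2(-1)^{k+1}\sum_{j\geq0}\tilde a_{j,k}\phi_j$, whose coefficients are the modified affine coordinates $\tilde a_{j,k}=a_{j,k}-a_{j,0}a_{0,k}$; so the task is to put $\text{span}\{g\phi_{-k}g^{-1}\}_{k\geq1}$ into this form.

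The first step is to compute the conjugations explicitly. From the anticommutation relations \eqref{eq-anticomm} one finds $[\hf,\phi_m]=f(m)\phi_m$ for $m>0$, $[\hf,\phi_0]=0$, $[\hf,\phi_{-m}]=-f(m)\phi_{-m}$ for $m>0$, and $[H_{-1},\phi_j]=\phi_{j+1}$ for every $j$. Exponentiating the adjoint actions yields $e^{\hf}\phi_m e^{-\hf}=e^{f(m)}\phi_m,\ \phi_0,\ e^{-f(m)}\phi_{-m}$ in the three regimes (with the convention $f(0)=0$) and $e^{pH_{-1}}\phi_j e^{-pH_{-1}}=\sum_{l\geq0}\frac{p^l}{l!}\phi_{j+l}$. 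Combining these,
\[
g\phi_{-k}g^{-1}=\sum_{l\geq0}\frac{p^l}{l!}\,e^{\hf}\phi_{l-k}e^{-\hf}
=\sum_{j=1}^{k}\frac{p^{k-j}}{(k-j)!}e^{-f(j)}\phi_{-j}+\frac{p^k}{k!}\phi_0+\sum_{m\geq1}\frac{p^{k+m}}{(k+m)!}e^{f(m)}\phi_m,
\]
whose leading negative-index term is $e^{-f(k)}\phi_{-k}$; the matrix of the $\phi_{-j}$-coefficients is lower triangular with nonzero constant diagonal entries, hence invertible.

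The elimination has a closed form. Setting $\tilde w_k:=e^{f(k)}\sum_{k'=1}^{k}\frac{(-p)^{k-k'}}{(k-k')!}\,g\phi_{-k'}g^{-1}$ and using the binomial identity $\sum_{i=0}^{k-1}(-1)^i\binom{N}{i}=(-1)^{k-1}\binom{N-1}{k-1}$, one checks that the $\phi_{-j}$-coefficient of $\tilde w_k$ is $\delta_{jk}$ for $j\geq1$, while the $\phi_n$-coefficient for $n\geq0$ equals $(-1)^{k-1}e^{f(k)+f(n)}\frac{p^{n+k}}{(n+k)\,(k-1)!\,n!}$. Comparing with $\Phi(w_k)$ gives $\tilde a_{n,k}=\frac{k}{2(n+k)}\cdot\frac{p^{n+k}}{n!\,k!}\,e^{f(n)+f(k)}$ for all $n\geq0$; in particular $\tilde a_{0,k}=\frac{p^k}{2\,k!}e^{f(k)}$. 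Finally $a_{0,k}=\tilde a_{0,k}$ because $a_{0,0}=0$, $a_{k,0}=-a_{0,k}$, and for $n,m>0$ the relation $a_{n,m}=\tilde a_{n,m}-a_{0,n}a_{0,m}$ combined with $\frac{m}{2(n+m)}-\frac14=\frac{m-n}{4(n+m)}$ produces the stated formula; the antisymmetry $a_{n,m}=-a_{m,n}$ is then manifest, and it can be cross-checked against \eqref{eq-mofif-condition}.

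The step I expect to be the main obstacle is purely organizational: carefully handling the three sign/shift regimes ($l-k'$ positive, zero, negative) in the conjugation of $g$, and verifying the triangularity that makes the explicit combination $\tilde w_k$ reduce to echelon form; the binomial-sum evaluations themselves are elementary and no convergence issue arises since everything is polynomial in the formal variable $p$. A more computational alternative would be to evaluate the fermionic one-point functions $\langle\phi_0\phi(z)e^A\rangle$ and $\langle\phi_{-k}\phi(z)e^A\rangle$ from the proof of Theorem \ref{thm-wave&affine} directly, using the same conjugation formulas; but this seems heavier, since $\langle 0|e^{pH_{-1}}$ is not proportional to $\langle 0|$ and so $e^{pH_{-1}}$ cannot simply be pushed out to the left.
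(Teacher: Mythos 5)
Your proposal is correct, but it takes a genuinely different route from the paper. The paper works entirely on the bosonic/Schur-$Q$ side: it expands $e^{pH_{-1}}|0\rangle$ over strict partitions using $\Gamma_-^B(\bm{\tilde t})|0\rangle=\sum_\mu 2^{-\half l(\mu)}Q_\mu(\half\bm{\tilde t})|\mu\rangle$ specialized at $\bm{\tilde t}=(p,0,0,\dots)$, invokes the closed formula $Q_\mu(\frac{\delta_{k,1}}{2}p)=\frac{p^{|\mu|}}{\prod\mu_i!}\prod_{i<j}\frac{\mu_i-\mu_j}{\mu_i+\mu_j}$ from \cite{mm}, uses that $e^{\hf}$ acts diagonally on $|\mu\rangle$ by $e^{\sum_i f(\mu_i)}$, and then reads off $a_{0,n}$ and $a_{n,m}$ as the coefficients of $|\mu\rangle$ with $l(\mu)\le 2$ in the Schur-$Q$/Pfaffian expansion. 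You instead work on the Grassmannian side: conjugate the annihilators $\phi_{-k}$ by $g=e^{\hf}e^{pH_{-1}}$ (your commutator formulas and the resulting expression for $g\phi_{-k}g^{-1}$ are correct), perform a triangular elimination whose closed form rests on the partial alternating binomial sum $\sum_{i=0}^{k-1}(-1)^i\binom{N}{i}=(-1)^{k-1}\binom{N-1}{k-1}$ (I checked the resulting coefficient $(-1)^{k-1}e^{f(k)+f(n)}\frac{p^{n+k}}{(n+k)(k-1)!\,n!}$; it is right), and match against \eqref{eq-def-basiswk} to get $\tilde a_{n,k}=\frac{k}{2(n+k)}\frac{p^{n+k}}{n!\,k!}e^{f(n)+f(k)}$, which indeed reproduces the stated $a_{0,n}$ and $a_{n,m}$ after subtracting $a_{0,n}a_{0,m}$. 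What each approach buys: the paper's argument is shorter because it imports the principal-specialization formula for $Q_\mu$ and needs no elimination, while your argument avoids Schur $Q$-functions entirely and, as a bonus, produces the modified coordinates $\tilde a_{n,m}$ in closed form, i.e., exactly the basis vectors $\Phi_k^{(r,\vartheta)}$ used later in \S\ref{sec-sH-KS}. The one place where you lean on facts stated only loosely is the identification of $\mathrm{span}\{g\phi_{-k}g^{-1}\}_{k\ge1}$ with $\mathrm{span}\{\Phi(w_k)\}_{k\ge1}$; this is standard (both are maximal isotropic subspaces of operators annihilating the same nonzero vector $g|0\rangle=e^A|0\rangle$, and any two such must coincide because the annihilator is itself isotropic), and it can also be made self-contained within the paper's toolkit by conjugating $g\phi_{-k}g^{-1}$ back with $e^{-A}$ via \eqref{eq-lem-conj} and noting that a linear combination of fermions kills $|0\rangle$ only if it lies in the span of the $\phi_{-j}$, $j\ge1$; with that remark added, your proof is complete.
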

\begin{proof}
Recall that the BKP-affine coordinates of a tau-function are the coefficients of the Schur $Q$-functions $Q_\mu$
with $l(\mu ) \leq 2$ in its Schur $Q$-function expansion
(see \cite{wy, hb} for details; and see \S \ref{sec-pre-affine} for a brief review).
That is equivalent to say,
we need to expand the corresponding vector in fermionic Fock space as a summation of basis vectors
$|\mu\rangle$ and take the coefficients of $|\mu\rangle$ with $l(\mu ) \leq 2$.

Let $\bm{\tilde t} = (\tilde t_1,\tilde t_3,\tilde t_5,\cdots)$ be a sequence of formal variables.
Notice that one has (see e.g. \cite[\S 3]{le}):
\begin{equation*}
\Gamma_-^B (\bm{\tilde t}) |0\rangle
= \sum_{\mu\in DP} 2^{-\half l(\mu)}
\cdot Q_\mu (\half \bm{\tilde t}) |\mu\rangle.
\end{equation*}
Then by evaluating at $\bm{\tilde t} = (p,0,0,0,\cdots)$, one gets:
\begin{equation*}
e^{p H_{-1}} |0\rangle = \sum_{\mu\in DP}
2^{-\half l(\lambda)} \cdot
Q_\mu (\frac{\delta_{k,1}}{2} p) |\mu\rangle.
\end{equation*}
The following identity is known in literatures
(see \cite[(56)]{mm}):
\begin{equation*}
Q_\mu (\frac{\delta_{k,1}}{2} p) =
\frac{p^{|\mu|}}{ \prod_{i=1}^{l(\mu)} \mu_i!}
\cdot \prod_{i<j} \frac{\mu_i-\mu_j}{\mu_i+\mu_j},
\end{equation*}
hence we have:
\begin{equation*}
e^{p H_{-1}} |0\rangle
= \sum_{\mu\in DP}
2^{-\half l(\mu)} \cdot
\frac{p^{|\mu|}}{ \prod_{i=1}^{l(\mu)} \mu_i!}
\cdot \prod_{i<j} \frac{\mu_i-\mu_j}{\mu_i+\mu_j}
 |\mu\rangle,
\end{equation*}
and then:
\be
\label{eq-pf-hf-affine}
e^\hf e^{pH_{-1}} |0\rangle
= \sum_{\mu\in DP}
2^{-\half l(\mu)} \cdot
\frac{p^{|\mu|}}{ \prod_{i=1}^{l(\mu)} \mu_i!}
\cdot \prod_{i<j} \frac{\mu_i-\mu_j}{\mu_i+\mu_j} \cdot
e^\hf |\mu\rangle.
\ee

Now recall that the vector $|\mu\rangle \in \cF_B$ is defined by:
\ben
|\mu\rangle = \begin{cases}
\phi_{\mu_1} \phi_{\mu_2} \cdots \phi_{\mu_n} |0\rangle, & \text{ $n$ even;}\\
\sqrt{2} \phi_{\mu_1} \phi_{\mu_2} \cdots \phi_{\mu_n} \phi_0 |0\rangle, & \text{ $n$ odd.}
\end{cases}
\een
Using the Baker-Campbell-Hausdorff formula,
one has (see e.g. \cite[Lemma 3.1]{wy3}):
\begin{equation*}
\begin{split}
&e^{\hf} \phi_k e^{-\hf} = e^{f(k)} \phi_k,
\qquad \forall k>0;\\
&e^{\hf} \phi_0 e^{-\hf} = \phi_0.
\end{split}
\end{equation*}
Thus we have
\begin{equation*}
\begin{split}
e^\hf |\mu\rangle =&
(e^\hf\phi_{\mu_1} e^{-\hf}) (e^\hf\phi_{\mu_2}e^{-\hf})
 \cdots (e^\hf \phi_{\mu_n}e^{-\hf}) |0\rangle\\
 =& e^{f(\mu_1) + f(\mu_2) +\cdots +f(\mu_k)} |\mu\rangle
 \end{split}
\end{equation*}
for $n$ even,
and
\begin{equation*}
\begin{split}
e^\hf |\mu\rangle =& \sqrt{2} \cdot
(e^\hf\phi_{\mu_1} e^{-\hf}) (e^\hf\phi_{\mu_2}e^{-\hf})
 \cdots (e^\hf \phi_{\mu_n}e^{-\hf}) (e^\hf \phi_{0}e^{-\hf}) |0\rangle\\
 =& e^{f(\mu_1) + f(\mu_2) +\cdots +f(\mu_k)} |\mu\rangle
 \end{split}
\end{equation*}
for $n$ odd.
Plugging this into \eqref{eq-pf-hf-affine},
we get:
\begin{equation*}
e^\hf e^{pH_{-1}} |0\rangle
= \sum_{\mu\in DP}
2^{-\half l(\mu)} \cdot
\frac{p^{|\mu|}}{ \prod_{i=1}^{l(\mu)} \mu_i!}
\cdot \prod_{i<j} \frac{\mu_i-\mu_j}{\mu_i+\mu_j} \cdot
e^{\sum_{i=1}^{l(\mu)} f(\mu_i)}
|\mu\rangle.
\end{equation*}
Notice that the BKP-affine coordinates are the coefficients of $|\mu\rangle$
with $l(\mu) \leq 2$
(up to a factor $2^{\half l(\lambda)}$, see \S \ref{sec-pre}),
and thus the conclusion is proved.
\end{proof}

Now by taking $f:\bZ_{>0} \to \bC$ to be
(see \eqref{eq-tau-rtheta-1}, \eqref{eq-tau-rtheta-2}):
\be
\label{eq-def-f-rtheta}
f^{(r,\vartheta)}:\bZ_{>0} \to \bC,
\qquad
m \mapsto \beta \cdot \frac{m^{r+1}}{r+1},
\ee
we obtain the following:
\begin{Corollary}
The BKP-affine coordinates of $\tau^{(r,\vartheta)}$ are
$a_{0,0}^{(r,\vartheta)} = 0$,
and:
\be
\label{eq-affinecor-sH}
\begin{split}
&a_{0,n}^{(r,\vartheta)} = -a_{n,0}^{(r,\vartheta)} =
  \frac{p^n}{ 2\cdot n!}\cdot \exp\big( \beta\frac{n^{r+1}}{r+1} \big),
\qquad \forall n>0;\\
&a_{n,m}^{(r,\vartheta)} = \frac{p^{m+n}}{4\cdot m!\cdot n!} \cdot \frac{m-n}{m+n} \cdot
\exp \big( \beta\frac{m^{r+1} + n^{r+1}}{r+1} \big),
\qquad \forall m,n>0.
\end{split}
\ee
\end{Corollary}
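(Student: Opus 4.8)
The plan is to obtain the Corollary as an immediate specialization of the preceding Proposition. First I would recall from \eqref{eq-tau-rtheta-1}--\eqref{eq-tau-rtheta-2} the fermionic expression
\[
\tau^{(r,\vartheta)}(\bm t) = \langle 0 | \Gamma_+^B(\bm t)\, \exp\!\Big(\frac{\beta}{r+1}\hF_{r+1}\Big)\, e^{pH_{-1}} | 0 \rangle ,
\qquad
\hF_{r+1} = \sum_{k>0}(-1)^k k^{r+1} :\phi_k\phi_{-k}:,
\]
noting that the BKP time variables $\bm t = (t_1,t_3,t_5,\dots)$ run over odd indices, so that the operator $\Gamma_+$ appearing in \eqref{eq-tau-rtheta-1} is exactly the $\Gamma_+^B$ used in the Proposition. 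The key point is that, regrouping the summand by the index $m$,
\[
\frac{\beta}{r+1}\hF_{r+1} = \sum_{m>0}(-1)^m\,\frac{\beta\, m^{r+1}}{r+1}\,:\phi_m\phi_{-m}:,
\]
which is precisely the operator $\hf$ of \eqref{eq-tau-diagonal} for the choice $f = f^{(r,\vartheta)}$ of \eqref{eq-def-f-rtheta}. Hence $\tau^{(r,\vartheta)}(\bm t)$ is of the form \eqref{eq-tau-diagonal} and the Proposition applies to it verbatim.

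Next I would substitute $f = f^{(r,\vartheta)}$ into the two formulas of the Proposition. Since $e^{f^{(r,\vartheta)}(n)} = \exp(\beta n^{r+1}/(r+1))$ and $e^{f^{(r,\vartheta)}(m)+f^{(r,\vartheta)}(n)} = \exp(\beta(m^{r+1}+n^{r+1})/(r+1))$, this yields $a^{(r,\vartheta)}_{0,n}$ for $n>0$ and $a^{(r,\vartheta)}_{n,m}$ for $m>n>0$ exactly as in \eqref{eq-affinecor-sH}. The remaining entries I would fill in using the antisymmetry $a_{n,m} = -a_{m,n}$ of \eqref{eq-bog-A-cond}: this gives $a^{(r,\vartheta)}_{n,0} = -a^{(r,\vartheta)}_{0,n}$, it forces $a^{(r,\vartheta)}_{n,n} = 0$ for all $n\geq 0$ (in particular $a^{(r,\vartheta)}_{0,0}=0$), and --- since the prefactor $\tfrac{p^{m+n}}{4\, m!\, n!}\exp(\cdots)$ is symmetric in $m,n$ while $\tfrac{m-n}{m+n}$ changes sign under $m\leftrightarrow n$ --- it shows that the single closed formula written for $a^{(r,\vartheta)}_{n,m}$ in \eqref{eq-affinecor-sH} is valid for all $m,n>0$, and not merely for $m>n$.

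Since all the real content is carried by the Proposition, there is essentially no obstacle at the level of the Corollary itself. The only genuinely delicate point, already dealt with in the proof of the Proposition, is the bookkeeping of the various powers of $2$ --- coming from the normalization of the basis $\{|\mu\rangle\}_{\mu\in DP}$, from the convention $Q_\mu(\bm t/2)$, and from the relation \eqref{eq-bf-schurq} between $|\mu\rangle$ and Schur $Q$-functions --- which combine with the evaluation $Q_\mu(\tfrac{\delta_{k,1}}{2}p) = \frac{p^{|\mu|}}{\prod_i\mu_i!}\prod_{i<j}\frac{\mu_i-\mu_j}{\mu_i+\mu_j}$ and the conjugation rule $e^{\hf}\phi_k e^{-\hf} = e^{f(k)}\phi_k$ (for $k>0$) to produce the constants $\tfrac12$ and $\tfrac14$ together with the rational factor $\tfrac{m-n}{m+n}$. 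If one preferred to avoid invoking the Proposition at all, one would simply rerun that computation directly with $f = f^{(r,\vartheta)}$ from the start, expanding $e^{\hf}e^{pH_{-1}}|0\rangle$ in the basis $\{|\mu\rangle\}$ and reading off the coefficients indexed by strict partitions of length $\leq 2$.
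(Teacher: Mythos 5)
Your proposal is correct and matches the paper's own derivation, which likewise obtains the Corollary by setting $f = f^{(r,\vartheta)}$ in the preceding Proposition and invoking the antisymmetry $a_{n,m} = -a_{m,n}$ to cover the remaining index ranges. The observation that the symmetric prefactor times the antisymmetric factor $\tfrac{m-n}{m+n}$ extends the single formula to all $m,n>0$ is exactly the intended (and valid) reading.
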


\subsection{A formula for connected spin Hurwitz numbers}

As a byproduct of the result in last subsection,
we obtain an explicit formula for connected spin Hurwitz numbers.
In a previous work \cite{wy},
the second and third authors have derived a formula for the connected bosonic $n$-point functions
of a BKP tau-function in terms of its BKP-affine coordinates,
see \cite[Theorem 1.1]{wy}:
\begin{Theorem}
[\cite{wy}]
Let $A(w,z),\wA(w,z)$ be the following generating series of the BKP-affine coordinates $\{a_{n,m}\}$
of a BKP tau-function $\tau(\bm t)$ satisfying $\tau(0)=1$:
\begin{equation*}
\begin{split}
&A(w,z) = \sum_{n,m>0} (-1)^{m+n+1} \cdot a_{n,m} w^{-n} z^{-m}
-\half \sum_{n>0}(-1)^n \cdot a_{n,0} (w^{-n}-z^{-n}),\\
&\wA(w,z) =  A(w,z) -\frac{1}{4} -\half\sum_{i=1}^\infty (-1)^{i} w^{-i} z^i.
\end{split}
\end{equation*}
Then:
\begin{equation*}
\sum_{i> 0: \text{ odd}}
\frac{\pd \log\tau(\bm t)}{\pd t_{i}} \bigg|_{\bm t=0}
\cdot z^{-i}
=A(-z,z),
\end{equation*}
and for $n\geq 2$,
\be
\label{eq-npt-general}
\begin{split}
&\sum_{i_1,\cdots,i_n> 0: \text{ odd}}
\frac{\pd^n \log\tau(\bm t)}{\pd t_{i_1}\cdots \pd t_{i_n}} \bigg|_{\bm t=0}
\cdot z_1^{-i_1}\cdots z_n^{-i_n}
=
-\delta_{n,2} \cdot i_{z_1,z_2}
\frac{z_1z_2(z_2^2+z_1^2)}{2(z_1^2-z_2^2)^2}\\
&\qquad\qquad\qquad
+ \sum_{\substack{ \sigma: \text{ $n$-cycle} \\ \epsilon_2,\cdots,\epsilon_n \in\{\pm 1\}}}
(-\epsilon_2\cdots\epsilon_n) \cdot
\prod_{i=1}^n \xi(\epsilon_{\sigma(i)} z_{\sigma(i)}, -\epsilon_{\sigma(i+1)} z_{\sigma(i+1)}),
\end{split}
\ee
where
\begin{equation*}
i_{z_1,z_2}
\frac{z_1z_2(z_2^2+z_1^2)}{2(z_1^2-z_2^2)^2} =
\sum_{n>0:\text{ odd}} \frac{n}{2} z_1^{-n} z_2^n,
\end{equation*}
and $\xi$ is given by:
\begin{equation*}
\begin{split}
\xi(\epsilon_{\sigma(i)} z_{\sigma(i)}, -\epsilon_{\sigma(i+1)} z_{\sigma(i+1)}) =
\begin{cases}
\wA (\epsilon_{\sigma(i)} z_{\sigma(i)}, -\epsilon_{\sigma(i+1)} z_{\sigma(i+1)}),
& \sigma(i)<\sigma(i+1);\\
-\wA( -\epsilon_{\sigma(i+1)} z_{\sigma(i+1)} ,\epsilon_{\sigma(i)} z_{\sigma(i)}),
& \sigma(i)>\sigma(i+1),
\end{cases}
\end{split}
\end{equation*}
and we use the conventions
$\epsilon_{1} :=1$ and
$\sigma(n+1):=\sigma(1)$.
\end{Theorem}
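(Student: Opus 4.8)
The plan is to prove the formula by fermionizing via the boson--fermion correspondence of type $B$ and then applying Wick's theorem together with the Bogoliubov-transform identity \eqref{eq-lem-conj}; this is the type $B$ counterpart of Zhou's argument for the KP hierarchy in \cite{zhou1}.

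The first step is to rewrite the connected $n$-point functions of $\log\tau$ as fermionic correlators. Since $\tau(\bm t) = \langle 0|\Gamma_+^B(\bm t)e^A|0\rangle$ with $\tau(0) = \langle e^A\rangle = 1$, and since by \eqref{eq-bfcor-boson} the operator $\pd_{t_n}$ corresponds to inserting the boson $H_n$, the left-hand sides of the two claimed identities are connected vacuum correlators (cumulants) of the currents $H_n$ in the state $e^A|0\rangle$. Packaging these into generating series and using $H(z) = -\half :\phi(-z)\phi(z):$, one obtains, up to the projection onto odd indices and the evident $z_j\mapsto -z_j$ redundancy, the connected correlator of $n$ normal-ordered fermion bilinears $:\phi(-z_j)\phi(z_j):$ evaluated against $e^A|0\rangle$. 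The $n = 1$ identity follows by the same direct expansion of $e^A$ used in the proof of Theorem \ref{thm-wave&affine}.

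Next I would trivialize the insertion of $e^A$. Because $a_{0,0} = 0$, the operator $A$ annihilates $\langle 0|$, so $\langle 0|e^A = \langle 0|$ and the correlator equals $\langle 0|\,e^{-A}\big(\prod_j :\phi(-z_j)\phi(z_j):\big)e^A\,|0\rangle$. By \eqref{eq-lem-conj}, $\phi_i\mapsto e^{-A}\phi_i e^A$ is a triangular Bogoliubov transformation whose coefficients are built from the affine coordinates $a_{n,m}$; hence each $e^{-A}\phi(z)e^A$ is again a linear combination of neutral fermions, and the whole expression becomes an ordinary vacuum expectation value of a product of quadratics in free neutral fermions, to which Wick's theorem applies. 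Extracting the connected part forces the $n$ bilinears to be joined into a single necklace, so the surviving pairings are indexed by $n$-cycles $\sigma$, which is the origin of the cyclic sum in the formula. At each vertex $j$ one must decide which of the two legs $\phi(\pm z_j)$ is contracted ``forward'' along the necklace; this is where the signs $\epsilon_j\in\{\pm 1\}$ enter, while the overall factor $-\epsilon_2\cdots\epsilon_n$ bookkeeps the reordering of neutral fermions and the one uncontracted parity (with $\epsilon_1 := 1$ and $\sigma(n+1) := \sigma(1)$). Each edge of the necklace contributes a two-point function: contractions that pass through $e^A$ give $\langle\phi(u)\phi(v)e^A\rangle$, whose generating series --- after the free term $i_{u,v}\frac{u-v}{2(u+v)}$ of \eqref{eq-normal-fermfield} is split off --- is, up to normalization, $\wA(u,v)$, and the two cases in the definition of $\xi$ (according to whether $\sigma(i) < \sigma(i+1)$ or not) simply record which expansion region $\wA$ must be taken in. The single genuinely free contraction, which survives only for $n = 2$, produces the contact term $i_{z_1,z_2}\frac{z_1z_2(z_1^2+z_2^2)}{2(z_1^2-z_2^2)^2}$.

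The main obstacle is the sign and index bookkeeping: one must carefully track the signs produced by permuting neutral fermions through all the necklace contractions, verify that the per-vertex parity choices assemble into exactly $-\epsilon_2\cdots\epsilon_n$, and correctly isolate the $n = 2$ anomaly in which one edge is ``free'' rather than dressed by $e^A$. Everything else is a direct, if somewhat lengthy, application of Wick's theorem and the affine-coordinate generating-series dictionary; the complete argument is given in \cite{wy}.
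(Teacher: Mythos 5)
This theorem is quoted in the paper from \cite{wy}; the paper itself supplies no proof, so there is no internal argument to compare against beyond the citation. Your outline follows the same strategy as the cited source (and as the paper's own proof of Theorem \ref{thm-wave&affine} in miniature): express the cumulants of the currents as fermionic correlators against $e^A|0\rangle$, use $H(z)=-\half:\phi(-z)\phi(z):$ together with $\langle 0|e^{A}=\langle 0|$, conjugate by $e^A$ via \eqref{eq-lem-conj}, and apply Wick's theorem, with connectedness forcing the $n$-cycle sum, the $\epsilon_j$ signs coming from the two legs $\phi(\mp z_j)$ of each bilinear, and the sole free contraction producing the $\delta_{n,2}$ correction term. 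The only caveat is that, as you acknowledge, the sign bookkeeping and the precise case analysis defining $\xi$ are deferred to \cite{wy} rather than carried out, so what you have is a correct plan matching the reference rather than a self-contained verification.
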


The above formula for $n\geq 2$ can be further simplified.
In fact,
for a fixed cycle $\sigma$ and a fixed $j$,
the formal variable $z_j$ appears only in two terms
\begin{equation*}
\xi (\pm z_i, -\epsilon_j z_j) \cdot \xi (\epsilon_j z_j, \pm z_k)
\end{equation*}
in $\prod_{i=1}^{n} \xi(\epsilon_{\sigma(i)} z_{\sigma(i)}, -\epsilon_{\sigma(i+1)} z_{\sigma(i+1)})$
(where $i$ and $k$ are adjacent to $j$ in this cycle $\sigma$),
hence replacing $\epsilon_j$ by $-\epsilon_j$ is equivalent to replacing $z_j$ by $-z_j$.
Then replacing $\epsilon_j$ by $-\epsilon_j$ does not change the terms
with odd orders in $z_j$ in the product
\begin{equation*}
\epsilon_2\cdots\epsilon_{n+m}
\prod_{i=1}^{n+m} \xi(\epsilon_{\sigma(i)} z_{\sigma(i)}, -\epsilon_{\sigma(i+1)} z_{\sigma(i+1)}).
\end{equation*}
Moreover,
we already know that the order of $z_j$ in the left-hand side of
\eqref{eq-npt-general} is always an odd number,
and therefore we can simply take $\epsilon_2 = \cdots = \epsilon_{n} =1$
in the right-hand side and then restrict to terms of odd degrees.
In this way we obtain:
\be
\label{eq-npt-general-2}
\begin{split}
&\sum_{i_1,\cdots,i_n> 0: \text{ odd}}
\frac{\pd^n \log\tau(\bm t)}{\pd t_{i_1}\cdots \pd t_{i_n}} \bigg|_{\bm t=0}
\cdot z_1^{-i_1}\cdots z_n^{-i_n}
=
-\delta_{n,2} \cdot
\frac{z_1z_2(z_2^2+z_1^2)}{2(z_1^2-z_2^2)^2}\\
&\qquad\qquad\qquad\qquad\qquad\qquad\qquad
- 2^{n-1}  \cdot \Big[ \sum_{ \sigma: \text{ $n$-cycle} }
\prod_{i=1}^n \xi( z_{\sigma(i)}, - z_{\sigma(i+1)})
\Big]_{\text{odd}},
\end{split}
\ee
where $[\cdot]_{\text{odd}}$ means taking the terms
of odd degrees in every $z_i$.

Now consider the tau-function $\tau^{(r,\vartheta)}$.
In this case the generating series $A(w,z)$ and $\wA(w,z)$ are
(for simplicity here we take $p=1$):
\begin{equation*}
\begin{split}
&A^{(r,\vartheta)}(w,z) = \sum_{(m,n)\in \bZ_{\geq 0}^2 \backslash \{(0,0)\}}
(-1)^{m+n+1}\cdot \frac{e^{f^{(r,\vartheta)}(m)+f^{(r,\vartheta)}(n)}}{4 \cdot m!\cdot n!} \cdot \frac{m-n}{m+n}
w^{-n} z^{-m},\\
&\wA^{(r,\vartheta)}(w,z) =  A^{(r,\vartheta)}(w,z) -\frac{1}{4} -\half\sum_{i=1}^\infty (-1)^{i} w^{-i} z^i,
\end{split}
\end{equation*}
and the above theorem gives:
\begin{Proposition}
Denote by
\begin{equation*}
H_\mu^\circ(\beta) = \sum_{g} 2^{g-1}\cdot h_{g,\mu}^{\circ,r,\vartheta} \cdot  \frac{ \beta^b}{|\mu|! \cdot l(\mu)!}
\end{equation*}
the generating series of connected spin single Hurwitz numbers
with ramification type $\mu$,
where $b$ and $g$ are related by \eqref{eq-R-H}.
Then we have:
\be
H_{(n)}^\circ(\beta)
= \frac{1}{n} \sum_{k=0}^n
(-1)^{k+1} \frac{(2k-n) e^{f^{(r,\vartheta)}(k)+f^{(r,\vartheta)}(n-k)}}{4n\cdot k!\cdot (n-k)!},
\qquad \forall n>0\text{ odd};
\ee
and for an odd partition $\mu=(\mu_1,\cdots,\mu_n) = (1^{m_1} 3^{m_3} 5^{m_5}\cdots)$ with $n\geq 2$,
\be
\begin{split}
H^\circ_\mu(\beta)
=-\frac{2^{n-1}}{z_\mu} \cdot{Coeff}_{\prod_{i=1}^{n} z_i^{-\mu_i}}
\Big(\sum_{ \sigma: \text{ $n$-cycle}}
\prod_{i=1}^n \xi(z_{\sigma(i)}, -z_{\sigma(i+1)})\Big),
\end{split}
\ee
where Coeff means taking the coefficient,
and $z_\mu = \prod_{j\geq 1} m_j! \cdot j^{m_j}$.
Here $\xi$ is:
\begin{equation*}
\begin{split}
\xi( z_{\sigma(i)}, - z_{\sigma(i+1)}) =
\begin{cases}
\wA^{(r,\vartheta)} ( z_{\sigma(i)}, - z_{\sigma(i+1)}),
& \sigma(i)<\sigma(i+1);\\
-\wA^{(r,\vartheta)}( - z_{\sigma(i+1)} , z_{\sigma(i)}),
& \sigma(i)>\sigma(i+1),
\end{cases}
\end{split}
\end{equation*}
and we use the convention
$\sigma(n+1):=\sigma(1)$.
\end{Proposition}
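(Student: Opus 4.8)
The plan is to reduce both identities to the $n$-point function formula of \cite[Theorem 1.1]{wy}, applied to the explicit affine coordinates \eqref{eq-affinecor-sH}. The first step is to record that, by the standard exponential relation between disconnected and connected spin Hurwitz numbers (see \cite{le, gkl}), after setting $p=1$ (which loses nothing, since $p$ only records $|\mu|$ and can be reinstated at the end) one has
\begin{equation*}
\log\tau^{(r,\vartheta)}(\bm t)\big|_{p=1}=\sum_{\mu\text{ odd}}H_\mu^\circ(\beta)\,p_\mu .
\end{equation*}
Next, for an odd partition $\mu=(1^{m_1}3^{m_3}\cdots)$ with $l(\mu)=n$ and any ordering $(\mu_1,\dots,\mu_n)$ of its parts: since $p_\nu=\prod_{j\geq 1}(j t_j)^{m_j(\nu)}$, where $m_j(\nu)$ is the multiplicity of $j$ in $\nu$, and $\pd_{t_j}^{m_j}(j t_j)^{m_j}=j^{m_j}m_j!$, only the term $\nu=\mu$ survives upon differentiating and setting $\bm t=0$, so that
\begin{equation*}
\frac{\pd^{\,n}\log\tau^{(r,\vartheta)}}{\pd t_{\mu_1}\cdots\pd t_{\mu_n}}\bigg|_{\bm t=0}=z_\mu\cdot H_\mu^\circ(\beta),
\qquad z_\mu=\prod_{j\geq 1}j^{m_j}m_j!.
\end{equation*}
By symmetry of the $n$-point function the labeling is immaterial, and the proposition becomes the task of reading off $\mathrm{Coeff}_{\prod_i z_i^{-\mu_i}}$ from the right-hand sides of the one- and $n$-point formulas of \cite{wy}, and dividing by $z_\mu$.

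For the one-part case I would use $\sum_{i>0\text{ odd}}\pd_{t_i}\log\tau|_{\bm t=0}\,z^{-i}=A^{(r,\vartheta)}(-z,z)$ (at $p=1$), insert the explicit $A^{(r,\vartheta)}$ assembled from \eqref{eq-affinecor-sH} with the convention $f^{(r,\vartheta)}(0)=0$, and extract the coefficient of $z^{-n}$. Denoting the summation index by $k$ (so the complementary index is $n-k$), using $(-z)^{-k}=(-1)^k z^{-k}$ and the oddness of $n$ to reduce the total sign on the $k$-th term to $(-1)^k$, and then using $\pd_{t_n}\log\tau|_{\bm t=0}=n\,H_{(n)}^\circ(\beta)$ together with $(-1)^{k+1}(2k-n)=(-1)^k(n-2k)$, one arrives at the first displayed formula.

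For $n\geq 2$ I would apply the simplified $n$-point formula \eqref{eq-npt-general-2} with the coordinates \eqref{eq-affinecor-sH}. Two observations make the coefficient extraction immediate. First, the polar correction $-\delta_{n,2}\,\frac{z_1z_2(z_2^2+z_1^2)}{2(z_1^2-z_2^2)^2}$, expanded as in \eqref{eq-npt-general} into $\sum_{j>0\text{ odd}}\frac{j}{2}z_1^{-j}z_2^{j}$, consists only of monomials in which some $z_i$ carries a \emph{positive} power, hence it contributes nothing to $\mathrm{Coeff}_{\prod_i z_i^{-\mu_i}}$ once all $\mu_i>0$. Second, since every $\mu_i$ is odd, the monomial $\prod_i z_i^{-\mu_i}$ already has odd degree in each $z_i$, so the operation $[\,\cdot\,]_{\text{odd}}$ may be dropped when taking this coefficient. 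Combining these with the dictionary above and dividing by $z_\mu$ yields the second formula, the function $\xi$ there being exactly the specialization ($\epsilon_i=1$, $p=1$) of the $\xi$ in \eqref{eq-npt-general-2}, i.e.\ the one built from $\wA^{(r,\vartheta)}$.

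The whole argument is essentially bookkeeping, so the only points that demand care are the normalization in the first step --- checking that $\log\tau^{(r,\vartheta)}$ reproduces exactly $H_\mu^\circ(\beta)$ with the prefactor $\tfrac1{|\mu|!\,l(\mu)!}$ --- and keeping track of the combinatorial constants $z_\mu$, $2^{n-1}$, and the doubled factor $\tfrac1n\cdot\tfrac1{4n}$ in the one-part case. I do not expect any genuinely hard step.
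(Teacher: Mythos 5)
Your proposal is correct and follows essentially the same route as the paper, which obtains the Proposition directly by specializing the $n$-point formula \eqref{eq-npt-general-2} (and the one-point identity) of \cite{wy} to the explicit affine coordinates \eqref{eq-affinecor-sH} and reading off coefficients via the standard connected/disconnected relation $\pd_{t_{\mu_1}}\cdots\pd_{t_{\mu_n}}\log\tau^{(r,\vartheta)}|_{\bm t=0}=z_\mu H_\mu^\circ(\beta)$. Your bookkeeping points (the $\delta_{n,2}$ correction only contributes monomials with a positive power of some $z_i$, the $[\,\cdot\,]_{\text{odd}}$ restriction is vacuous since all $\mu_i$ are odd, and the sign reduction in the one-part case) are exactly the details the paper leaves implicit, and they check out.
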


\section{Kac-Schwarz Operators and Quantum Spectral Curve for Spin Single Hurwitz Numbers}
\label{sec-sH-KS}

In this section,
we use the explicit expressions for the BKP-affine coordinates of $\tau^{(r,\vartheta)}$
to find a pair of Kac-Schwarz operators of type $B$
satisfying the conditions \eqref{eq-KS-conditionPQ} and $[P,Q]=1$.
In particular,
$P$ gives the quantum spectral curve for spin single Hurwitz numbers.

\subsection{Main results}
\label{sec-sH-QSC-main}

Let $r>0$ be even,
and let $f^{(r,\vartheta)}$ be the polynomial function
\be
f^{(r,\vartheta)}(m) = \beta \frac{m^{r+1}}{r+1}, \qquad m\in \bZ.
\ee
Denote by $\{\Phi_k^{(r,\vartheta)}\}_{k\geq 0}$ the following basis vectors
for $U_{\tau^{(r,\vartheta)}}$:
\be
\begin{split}
\Phi_k^{(r,\vartheta)} =&
z^k + \sum_{i=1}^\infty 2(-1)^{i}
(a_{k,i}^{(r,\vartheta)} - a_{k,0}^{(r,\vartheta)}a_{0,i}^{(r,\vartheta)})
z^{-i}\\
=&
z^k+\sum_{i\ge1}2(-1)^i \cdot \frac{p^{k+i}}{4\cdot k! \cdot i!} \cdot
\frac{2i}{i+k}e^{f^{(r,\vartheta)}(k)+f^{(r,\vartheta)}(i)}z^{-i}.
\end{split}
\ee
Then we have:
\begin{Theorem}
\label{thm-sH-main}
Let $P,Q$ be the following operators:
\be
\label{eq-KS-spinH}
\begin{split}
P= &\exp\Big( \frac{\beta}{r+1} \cdot \sum_{i=0}^{r}z^{-1}\big(z\frac{\partial}{\partial z}\big)^iz\big(z\frac{\partial}{\partial z}\big)^{r-i}\Big)\partial_z\\
& \quad -p\exp\Big(\frac{2\beta}{r+1}\cdot \sum_{i=0}^{r}z^{-2}\big(z\frac{\partial}{\partial z}\big)^iz^{2}\big(z\frac{\partial}{\partial z}\big)^{r-i}\Big)z^{-2};\\
Q= &\exp\Big(- \frac{\beta}{r+1}\cdot \sum_{i=0}^{r}z\big(z\frac{\partial}{\partial z}\big)^iz^{-1}\big(z\frac{\partial}{\partial z}\big)^{r-i}\Big)z,
\end{split}
\ee
then $P$ and $Q$ are Kac-Schwarz operators of the BKP tau-function $\tau^{(r,\vartheta)}$
for spin single Hurwitz numbers with $(r+1)$-completed cycles.
Moreover,
we have:
\be
\label{eq-PQ-action}
\begin{split}
&P(\Phi_k^{(r,\vartheta)})=  k e^{f^{(r,\vartheta)}(k)-f^{(r,\vartheta)}(k-1)} \Phi_{k-1}^{(r,\vartheta)}
-p e^{f^{(r,\vartheta)}(k)-f^{(r,\vartheta)}(k-2)} \Phi_{k-2}^{(r,\vartheta)},\\
&Q(\Phi_k^{(r,\vartheta)}) =
e^{f^{(r,\vartheta)}(k)-f^{(r,\vartheta)}(k+1)}\Phi_{k+1}^{(r,\vartheta)}
-\frac{p^{k+1} \cdot e^{f^{(r,\vartheta)}(k)}}{(k+1)!}\Phi_0^{(r,\vartheta)},
\end{split}
\ee
for every $k\geq 0$,
where we use the conventions $\Phi_{-1}^{(r,\vartheta)} = \Phi_{-2}^{(r,\vartheta)} =0$.
In particular,
\be
P(\Phi_0^{(r,\vartheta)}) =0.
\ee
Furthermore,
they satisfy the canonical commutation relation:
\be
[P,Q]=1.
\ee
\end{Theorem}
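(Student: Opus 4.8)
The plan is to verify all the claimed statements by direct computation, using the explicit formula for $\Phi_k^{(r,\vartheta)}$ and the explicit operators $P,Q$. The starting point is the observation that all three operators are built from the Euler operator $D := z\frac{\partial}{\partial z}$, whose monomials $z^n$ are eigenvectors: $D z^n = n z^n$. Consequently, any operator of the form $\exp\big(\frac{c\beta}{r+1}\sum_{i=0}^{r} z^{-a}D^i z^{a} D^{r-i}\big)$ acts on $z^n$ in a computable way. In fact $z^{-a}D^a z^a D^{r-a}$ applied to $z^n$ gives $(n+a)^a \cdot \binom{\cdot}{\cdot}$-type factors; more precisely one checks that $\sum_{i=0}^{r} z^{-a}D^i z^{a}D^{r-i}$ acting on $z^n$ produces the scalar $\sum_{i=0}^{r} n^{r-i}(n+a)^i = \frac{(n+a)^{r+1}-n^{r+1}}{a}$ (a telescoping/geometric-series identity). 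Hence the exponential multiplies $z^n$ by $\exp\big(\frac{c\beta}{r+1}\cdot\frac{(n+a)^{r+1}-n^{r+1}}{a}\big)$, which, recalling $f^{(r,\vartheta)}(m)=\beta m^{r+1}/(r+1)$, is exactly $\exp\big(\frac{c}{a}(f^{(r,\vartheta)}(n+a)-f^{(r,\vartheta)}(n))\big)$. The three cases needed are $(a,c)=(1,1)$ for the first term of $P$, $(a,c)=(2,2)$ for the second term of $P$, and $(a,c)=(-1,-1)$ for $Q$ (with an extra shift by $z$); in each case the shift of argument of $f$ matches the $a$-value, so the conjugation-by-exponential has a clean interpretation as a "twist by $e^{f}$".

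The first main step is therefore a lemma computing $P(z^n)$ and $Q(z^n)$ on a general monomial. Writing $g(n):=e^{f^{(r,\vartheta)}(n)}$, one should get $Q(z^n) = g(n)/g(n+1)\cdot z^{n+1}$ and $P(z^n) = n\, g(n)/g(n-1)\cdot z^{n-1} - p\, g(n)/g(n-2)\cdot z^{n-2}$. The second main step is to extend this to the full basis vectors $\Phi_k^{(r,\vartheta)} = z^k + \sum_{i\ge 1} 2(-1)^i \frac{p^{k+i}}{4\cdot k!\, i!}\cdot\frac{2i}{i+k}\, g(k)g(i)\, z^{-i}$: one applies the monomial formulas termwise. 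The key algebraic miracle to check is that the negative-power tails reorganize correctly — e.g. for $Q$, applying the monomial formula to the tail of $\Phi_k$ and to $z^k$ must combine into $e^{f(k)-f(k+1)}\Phi_{k+1}^{(r,\vartheta)}$ plus a correction proportional to $\Phi_0^{(r,\vartheta)}$, and one must check that the coefficient of that correction is exactly $-\frac{p^{k+1}e^{f(k)}}{(k+1)!}$. This is a matter of matching coefficients of $z^{-j}$: on the left one gets terms from $Q$ acting on $z^{-i}$ (shift to $z^{-i+1}$, with the $a=-1$ factor) plus a boundary term at $i=1$ landing in $z^0$, and one verifies against the explicit $\Phi_{k+1}$ tail using the identity $\frac{2i}{i+(k+1)}$ vs. $\frac{2i}{i+k}$ plus the factorial ratios. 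The same bookkeeping, slightly longer, handles $P(\Phi_k^{(r,\vartheta)})$ and in particular shows $P(\Phi_0^{(r,\vartheta)})=0$ since the $k=0$ case kills the $z^{k-1},z^{k-2}$ leading behavior and the tail cancellation is exact.

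Once these action formulas \eqref{eq-PQ-action} are established, the Kac-Schwarz property is immediate: $P$ and $Q$ each map each $\Phi_k^{(r,\vartheta)}$ into the span of $\{\Phi_j^{(r,\vartheta)}\}$, hence preserve $U_{\tau^{(r,\vartheta)}}$. For the commutator relation $[P,Q]=1$, the cleanest route is to compute $PQ$ and $QP$ on the basis $\{\Phi_k^{(r,\vartheta)}\}$ using \eqref{eq-PQ-action}: one gets $QP(\Phi_k)$ and $PQ(\Phi_k)$ as explicit combinations of $\Phi_{k}$, $\Phi_{k-1}$, and $\Phi_0$; after telescoping the $e^{f}$-ratios one finds $[P,Q](\Phi_k)=\Phi_k$ for all $k$, which proves $[P,Q]=1$ on $U_{\tau^{(r,\vartheta)}}$. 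Alternatively — and more robustly, since $[P,Q]=1$ should hold as operators on all Laurent series, not just on $U_\tau$ — one uses the monomial action $P(z^n),Q(z^n)$ from the first lemma and checks $[P,Q](z^n)=z^n$ directly; here the $g(n)$-ratios cancel completely and what remains is the classical identity $[\partial_z, z]=1$ decorated by the two commuting "potential" terms, whose cross-terms cancel. I expect the main obstacle to be the middle step: verifying that the infinitely many negative-power coefficients reassemble exactly into the next basis vector plus the prescribed multiple of $\Phi_0^{(r,\vartheta)}$, because the rational factor $\frac{m-n}{m+n}$ (equivalently $\frac{2i}{i+k}$) does not transform multiplicatively under the shifts, so one has to massage partial-fraction-type identities rather than just track prefactors. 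Everything else is a bounded bookkeeping of exponential and factorial factors.
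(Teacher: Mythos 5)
Your proposal is correct and follows essentially the same route as the paper: a lemma computing $E_l(z^n)=e^{f(n+l)-f(n)}z^n$ via the telescoping identity $(n+l)^{r+1}-n^{r+1}=l\sum_{i=0}^r(n+l)^i n^{r-i}$, then termwise application to the $\Phi_k^{(r,\vartheta)}$ with a partial-fraction-type coefficient match (the paper's identity $\frac{k^2i}{i+k-1}-\frac{k(k-1)i}{i+k-2}=\frac{i(i-1)^2}{i+k-1}-\frac{i(i-1)(i-2)}{i+k-2}$), and $[P,Q]=1$ checked on monomials exactly as you suggest in your "more robust" alternative. The one step you flag as the main obstacle — reassembling the negative-power tails — is indeed the only nontrivial verification, and it reduces to precisely that rational identity.
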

The proof of this theorem will be given in the next subsection.

Notice that the equation $P(\Phi_0^{(r,\vartheta)}) =0$ is the quantum spectral curve
of the BKP tau-function $\tau^{(r,\vartheta)}$.
Now we look for the semi-classical limit of the operator $P$.
Let $(x,y)$ be the coordinates on the complex $2$-space $\bC^* \times \bC$ equipped with
the symplectic form $\omega = \frac{\sqrt{-1}}{\hbar}du\wedge dv$,
where $x=e^u$ and $y=v$.
Consider the canonical quantization:
\begin{equation*}
\hat x =x, \qquad\qquad \hat y = \hbar \frac{d}{du} = \hbar x\frac{d}{dx}.
\end{equation*}
Denote $x=z^{-1}$, and $\beta=\hbar^r$, $p=\frac{1}{\hbar}$.
After substituting these changes of variables into $P$ and multiplying by an additional $\hbar$,
we get:
\begin{equation*}
\hbar P
=-\exp\Big(\frac{\widehat{x}\sum_{i=0}^{r}\widehat{y}^i \widehat{x}^{-1}
\widehat{y}^{r-i}}{r+1}\Big)\widehat{x}\widehat{y}-
\exp\Big(2\frac{\widehat{x}^{2}\sum_{i=0}^{r}\widehat{y}^i
\widehat{x}^{-2} \widehat{y}^{r-i}}{r+1}\Big)\widehat{x}\widehat{x}.
\end{equation*}
Notice that:
\begin{equation*}
\exp({y^r})=\exp\Big(\frac{\sum_{i=0}^{r}y^iy^{r-i}}{r+1}\Big)
=\exp\Big(\frac{\sum_{i=0}^{r}x^{-l}y^ix^{l}y^{r-i}}{r+1}\Big),
\end{equation*}
and thus $\hbar P$ is the canonical quantization of the following classical potential:
\be
H(x,y)=-e^{y^r}xy-e^{y^{2r}}x^2.
\ee
The zero-locus $H(x,y)=0$ of this classical potential is a plane curve:
\be
\label{eq-classical-curve}
x=-ye^{-y^r}.
\ee
This is the semi-classical limit of the operator $P$.

Furthermore,
it has been conjectured by Giacchetto, Kramer, Lewa\'nski \cite{gkl}
and proved by Alexandrov and Shadrin \cite{as} that
the spin single Hurwitz numbers with $(r+1)$-completed cycles can be reconstructed from
the Eynard-Orantin topological recursion on this curve.
Thus we conclude that:
\begin{Corollary}
The equation $P(\Phi_0^{(r,\vartheta)}) = 0$ is the quantum spectral curve
of \eqref{eq-classical-curve} in the sense of Gukov-Su{\l}kowski \cite{gs}.
\end{Corollary}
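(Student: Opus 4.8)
Saying that the equation $P(\Phi_{0}^{(r,\vartheta)})=0$ is the quantum spectral curve of \eqref{eq-classical-curve} in the sense of \cite{gs} amounts to two assertions: (i) $P$ annihilates the Baker-Akhiezer function $\Psi(z)$ built from the Eynard-Orantin invariants $\omega_{g,k}$ of the curve \eqref{eq-classical-curve} through Gukov-Su{\l}kowski's ansatz \eqref{eq-def-Sn}; and (ii) the semi-classical limit of $P$ recovers \eqref{eq-classical-curve}. Assertion (ii) is already contained in the computation preceding the corollary: under $x=z^{-1}$, $\beta=\hbar^{r}$, $p=\hbar^{-1}$ and the canonical quantization $\hat x=x$, $\hat y=\hbar\,x\,\frac{d}{dx}$, the operator $\hbar P$ is the canonical quantization of $H(x,y)=-e^{y^{r}}xy-e^{y^{2r}}x^{2}$, whose zero locus is \eqref{eq-classical-curve}; expanding the two exponentials in powers of $\hbar$ displays $\hbar P$ as $\widehat A_{0}+\hbar\widehat A_{1}+\cdots$ with $\widehat A_{0}$ the canonical quantization of a defining function of \eqref{eq-classical-curve}, as required in \cite{gs}. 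So only (i) remains to be argued.

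For (i) the plan is to chain three inputs. First, Theorem \ref{thm-sH-main} gives $P(\Phi_{0}^{(r,\vartheta)})=0$, and by Sato's formula \eqref{eq-SatoFormula} the vector $\Phi_{0}^{(r,\vartheta)}(z)=w_{B}(0;z)$ is the principal specialization $\tau^{(r,\vartheta)}(-\tfrac{2}{z},-\tfrac{2}{3z^{3}},\dots)$ of the tau-function. Second, by the theorem of Alexandrov-Shadrin \cite{as}, which proves the conjecture of Giacchetto-Kramer-Lewa\'nski \cite{gkl}, the BKP tau-function $\tau^{(r,\vartheta)}$ is reconstructed by the Eynard-Orantin topological recursion on the curve \eqref{eq-classical-curve}, with the Bergman kernel specified there. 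Third, as explained in \S\ref{sec-isotr-Gr} (with the powers of $z$ in \eqref{eq-def-Sn} restricted to odd integers, since the BKP times are odd-indexed), when a BKP tau-function is reconstructed from topological recursion the Baker-Akhiezer function assembled from its $\omega_{g,k}$'s agrees, up to the unstable contributions $S_{0},S_{1}$ and an overall normalization, with that same principal specialization. Combining the three, $\Psi(z)$ coincides with $\Phi_{0}^{(r,\vartheta)}(z)=w_{B}(0;z)$ up to normalization, whence $P(\Psi)=0$ by the first input.

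The step that will require genuine care is the third input: one must verify that the BKP-adapted Gukov-Su{\l}kowski ansatz on \eqref{eq-classical-curve} really returns $w_{B}(0;z)$ and not some other series. The decisive point is the compatibility recorded just after the corollary, namely that the Bergman kernel of \cite{gkl, as} equals the correction term in the two-point function formula \cite[(74)]{wy} for a BKP tau-function. Concretely, I would make (i) precise by matching the genus expansion of $\log\Psi(z)$ with the connected $n$-point functions of $\tau^{(r,\vartheta)}$ at the principal specialization $t_{k}=-\tfrac{2}{kz^{k}}$, using the formula of \cite{wy} for those $n$-point functions in terms of the affine coordinates \eqref{eq-affinecor-sH}; the $n=2$ term is exactly where the Bergman-kernel identification is needed. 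Once $\Psi(z)=\Phi_{0}^{(r,\vartheta)}(z)$ is established, (i) together with (ii) is precisely the statement that the equation $P(\Phi_{0}^{(r,\vartheta)})=0$ (equivalently $\hbar P(\Psi)=0$) is the quantum spectral curve of \eqref{eq-classical-curve} in the sense of \cite{gs}.
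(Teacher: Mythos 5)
Your proposal matches the paper's argument: the corollary is obtained exactly by combining Theorem \ref{thm-sH-main} (which gives $P(\Phi_0^{(r,\vartheta)})=0$ with $\Phi_0^{(r,\vartheta)}=w_B(0;z)$ the principal specialization of $\tau^{(r,\vartheta)}$), the semi-classical computation showing that $\hbar P$ is the canonical quantization of $H(x,y)=-e^{y^r}xy-e^{y^{2r}}x^2$ whose zero locus is \eqref{eq-classical-curve}, and the Alexandrov--Shadrin theorem that $\tau^{(r,\vartheta)}$ is reconstructed by Eynard--Orantin recursion on that curve. The only difference is one of emphasis: the identification of the Gukov--Su{\l}kowski Baker--Akhiezer function with the principal specialization, which you propose to verify by matching genus expansions against the $n$-point function formula of \cite{wy}, is taken in the paper as part of the general framework of \S\ref{sec-isotr-Gr} (with the Bergman-kernel coincidence only noted in a remark and its deeper study left to future work), so your extra verification step is a refinement of the same route rather than a different one.
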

Note that in this case no higher order quantum corrections are involved.

\begin{Remark}
The spectral curve in \cite{gkl, as} is
(see \cite[(7.1)]{gkl}):
\begin{equation*}
x (z) = \log (z) - z^r,
\qquad\qquad
y(z) = z.
\end{equation*}
This coincides with \eqref{eq-classical-curve} after the change of coordinates $x\mapsto -e^x,y\mapsto y$.
\end{Remark}

\begin{Remark}
The Bergman kernel for the E-O recursion in \cite{as, gkl} is:
\begin{equation*}
\cB (z_1,z_2) = \half \Big(
\frac{1}{(z_1-z_2)^2} + \frac{1}{(z_1+z_2)^2} \Big) dz_1dz_2.
\end{equation*}
which differs from the correction term at $n=2$ in the right-hand of \eqref{eq-npt-general}
only by a factor $\half z_1z_2$.
It is worth noting that in Zhou's formula for connected $n$-point functions
of KP tau-functions \cite{zhou1},
the correction term at $n=2$ is $\frac{1}{(z_1-z_2)^2}$,
which coincide with the original Bergman kernel in E-O recursion.
We may hope to gain a better understanding about
the connections between such formulas for connected $n$-point functions
and the E-O topological recursion in future.

It is known that in the case of Witten-Kontsevich tau-function \cite{wit, kon},
the above two types of Bergman kernel are equivalent since they give the same recursion kernel,
see Zhou \cite{zhou3} for details.
\end{Remark}

\subsection{Proof of Theorem \ref{thm-sH-main}}

Now we prove Theorem \ref{thm-sH-main}.
For simplicity,
in this subsection we will use the convention $f = f^{(r,\vartheta)}$,
and denote:
\begin{equation*}
E_l=\exp\Big( \frac{l\cdot \beta}{r+1}\cdot \sum_{i=0}^{r}z^{-l}
\big(z\frac{\partial}{\partial z}\big)^iz^{l}
\big(z\frac{\partial}{\partial z}\big)^{r-i}
\Big), \qquad l\in \mathbb{Z}.
\end{equation*}
Then it is clear that:
\be
\label{eq-PQ-E}
P=E_1\partial_z-pE_2z^{-2},
\qquad Q=E_{-1}z.
\ee
First we prove the following:
\begin{Lemma}
\label{lem-KS-action}
The action of $P$ and $Q$ on $z^k$ are:
\begin{equation*}
\begin{split}
P (z^k)
   =& ke^{f(k)-f(k-1)}z^{k-1}-pe^{f(k)-f(k-2)}z^{k-2},\\
Q (z^k)
   =&  e^{f(k)-f(k+1)}z^{k+1},
\end{split}
\end{equation*}
for every $k\in \bZ$.
\end{Lemma}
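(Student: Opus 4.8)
The plan is to reduce the whole computation to the fact that the Euler operator $z\partial_z$ acts diagonally on monomials, so that each of the building-block operators $E_l$ simply multiplies $z^k$ by a scalar. First I would record the elementary identity $(z\partial_z)^j z^k = k^j z^k$ (valid for all $k\in\bZ$, $j\geq 0$) and use it to evaluate the exponent: for a nonzero integer $l$, the operator $D_l := \sum_{i=0}^{r} z^{-l}(z\partial_z)^i z^l (z\partial_z)^{r-i}$ sends $z^k$ to $\big(\sum_{i=0}^{r}(k+l)^i k^{r-i}\big) z^k$, since $(z\partial_z)^{r-i}$ contributes $k^{r-i}$, multiplication by $z^l$ shifts the degree to $k+l$, $(z\partial_z)^i$ contributes $(k+l)^i$, and $z^{-l}$ restores the degree. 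Then I would apply the telescoping identity $\sum_{i=0}^{r} a^i b^{r-i} = \frac{a^{r+1}-b^{r+1}}{a-b}$ with $a=k+l$, $b=k$ (legitimate precisely because $l\neq 0$), which turns this into $D_l(z^k) = \frac{(k+l)^{r+1}-k^{r+1}}{l}\, z^k$.

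Next, since $D_l$ preserves each line $\bC z^k$, its exponential $E_l = \exp\big(\tfrac{l\beta}{r+1}D_l\big)$ is well-defined term by term on $\bC[z,z^{-1}]$ (hence on $\cH$), with no convergence question to address, and one finds, after the factor $l$ cancels and using $f(m)=\beta m^{r+1}/(r+1)$,
\[
E_l(z^k) = \exp\Big(\frac{\beta}{r+1}\big((k+l)^{r+1}-k^{r+1}\big)\Big)z^k = e^{\,f(k+l)-f(k)}\,z^k .
\]
Finally I would substitute this into the decompositions $P=E_1\partial_z - pE_2 z^{-2}$ and $Q=E_{-1}z$ from \eqref{eq-PQ-E}: this yields $Q(z^k)=E_{-1}(z^{k+1})=e^{f(k)-f(k+1)}z^{k+1}$ and $P(z^k)=kE_1(z^{k-1}) - pE_2(z^{k-2}) = ke^{f(k)-f(k-1)}z^{k-1} - pe^{f(k)-f(k-2)}z^{k-2}$, which is exactly the claimed formula.

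There is essentially no obstacle here; the one point worth a sentence of care is the well-definedness asserted in the second step — because $D_l$ is diagonal on the monomial basis, $E_l$ collapses to an ordinary scalar exponential on each $z^k$, so nothing needs to be said about convergence of the operator series, and the degenerate case $l=0$ (where the telescoping identity would fail) never occurs among the values $l\in\{1,2,-1\}$ actually used. Everything else is the short bookkeeping indicated above.
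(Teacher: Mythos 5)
Your proof is correct and takes essentially the same route as the paper: both hinge on the exponent of $E_l$ acting diagonally on $z^k$ with eigenvalue $\frac{l\beta}{r+1}\sum_{i=0}^{r}(k+l)^i k^{r-i}=f(k+l)-f(k)$, the paper obtaining this via the operator identity $z^{-l}(z\partial_z)z^{l}=z\partial_z+l$ and the factorization of $(z\partial_z+l)^{r+1}-(z\partial_z)^{r+1}$, while you diagonalize on monomials first and then telescope the scalar sum (correctly noting $l\neq 0$). The concluding substitution into $P=E_1\partial_z-pE_2z^{-2}$ and $Q=E_{-1}z$ is identical to the paper's.
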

\begin{proof}
First observe that the action of the operator
$z^{-l}  (z\partial_z)  z^{l}$
on $z^{k}$ is:
\begin{equation*}
\big( z^{-l} (z\partial_z) z^{l} \big) ( z^k )
=(k+l) z^k =(z\partial_z+l)(z^k),
\end{equation*}
and thus $z^{-l}\circ (z\partial_z) \circ z^{l} = z\partial_z+l$, and then:
\begin{equation*}
\begin{split}
(z\partial_z+l)^{r+1}-(z\partial_z)^{r+1}=&
l \cdot \sum_{i=0}^r  (z\partial_z+l)^i (z\partial_z)^{r-i}\\
=& l\cdot \sum_{i=0}^{r}z^{-l}(z\partial_z)^iz^{l} (z\partial_z)^{r-i}.
\end{split}
\end{equation*}
Then the action of $E_l$ on $z^k$ is:
\begin{equation*}
\begin{split}
E_l (z^k) =&
\exp\Big( \frac{ \beta}{r+1}\cdot l\cdot \sum_{i=0}^{r}z^{-l}\big(z\frac{\partial}{\partial z})^iz^{l}
(z\frac{\partial}{\partial z}\big)^{r-i}\Big) (z^{k}) \\
=&\exp \Big( \frac{\beta}{r+1} \big( (z\partial_z+l)^{r+1}-(z\partial_z)^{r+1} \big)
\Big) (z^k) \\
=&\exp \Big( \frac{\beta}{r+1} \big( (k+l)^{r+1}-k^{r+1} \big)
\Big) (z^k) \\
=& e^{f(k+l)-f(k)}z^k.
\end{split}
\end{equation*}
Now the lemma follows from this relation and \eqref{eq-PQ-E}.
\end{proof}

As a consequence of the above lemma,
we may easily check that:
\begin{equation*}
\begin{split}
&PQ(z^k) = (k+1) z^k -p e^{f(k) - f(k-1)} z^{k-1};\\
&QP(z^k) = kz^k -pe^{f(k) - f(k-1)} z^{k-1},
\end{split}
\end{equation*}
and then:
\begin{equation*}
(PQ-QP)(z^k) = z^k,
\end{equation*}
which proves the commutation relation $[P,Q] =1$.

Finally,
we need to check \eqref{eq-PQ-action}.
Recall that $r$ is even,
hence $f$ is an odd function.
Then we have:
\begin{equation*}
\begin{split}
 P( \Phi_k^{(r,\vartheta)} )
 =& P\cdot \Big(z^k+\sum_{i\ge1}2(-1)^i \cdot \frac{p^{k+i}}{4\cdot k! \cdot i!}
 \cdot\frac{2i}{i+k}  e^{f(k)+f(i)}z^{-i} \Big) \\
 =&  ke^{f(k)-f(k-1)}z^{k-1}-pe^{f(k)-f(k-2)}z^{k-2}\\
  &+\sum_{i\ge1} 2(-1)^i \cdot
  \frac{p^{k+i}}{4 \cdot k ! \cdot i !} \cdot
   \frac{2i}{i+k} e^{f(k)+f(i)} \times\\
   &\qquad \Big(-i e^{f(i+1)-f(i)} z^{-i-1}
    -p e^{f(i+2)-f(i)} z^{-i-2}\Big).
 \end{split}
\end{equation*}
On the other hand,
\begin{equation*}
\begin{split}
& k e^{f(k)-f(k-1)} \Phi_{k-1}^{(r,\vartheta)}
-p e^{f(k)-f(k-2)} \Phi_{k-2}^{(r,\vartheta)} \\
=& k e^{f(k) - f(k-1)} \Big(
z^{k-1} +\sum_{i\geq 1} 2(-1)^i
\frac{p^{k-1+i}}{4\cdot (k-1)!\cdot i!} \cdot \frac{2i}{i+k-1}
e^{f(k-1) + f(i)} z^{-i} \Big) \\
& -p e^{f(k) - f(k-2)} \Big(
z^{k-2} +\sum_{i\geq 1} 2(-1)^i
\frac{p^{k-2+i}}{4\cdot (k-2)!\cdot i!} \cdot \frac{2i}{i+k-2}
e^{f(k-2) + f(i)} z^{-i} \Big).
 \end{split}
\end{equation*}
Now compare the coefficient of $z^{-i}$ in the above two expressions.
In order to prove the first equality in \eqref{eq-PQ-action},
we only need to prove the following identity:
\begin{equation*}
\begin{split}
&(-1)^i p^{k-1+i} e^{f(k)+f(i)} \Big(
\frac{k}{(k-1)! \cdot i!} \cdot \frac{i}{i+k-1} - \frac{1}{(k-2)! \cdot i!}\frac{i}{i+k-2}\Big)\\
=& (-1)^i p^{k-1+i} e^{f(k)+f(i)} \Big(
\frac{1}{k! \cdot (i-1)!} \cdot \frac{(i-1)^2}{i+k-1} - \frac{1}{k! \cdot (i-2)!}\frac{i-2}{i+k-2}\Big),
\end{split}
\end{equation*}
or equivalently,
\begin{equation*}
\begin{split}
\frac{k^2 i}{i+k-1} - \frac{k(k-1)i}{i+k-2}=
\frac{i(i-1)^2}{i+k-1} - \frac{i(i-1)(i-2)}{i+k-2},
\end{split}
\end{equation*}
which can be checked directly.
Therefore we have proved:
\begin{equation*}
P( \Phi_k^{(r,\vartheta)} )
= k e^{f(k)-f(k-1)} \Phi_{k-1}^{(r,\vartheta)}
-p e^{f(k)-f(k-2)} \Phi_{k-2}^{(r,\vartheta)},
\end{equation*}
where we have used the conventions $\Phi_{-1}^{(r,\vartheta)} = \Phi_{-2}^{(r,\vartheta)} =0$.
By taking $k=0$,
we obtain the proof of $P( \Phi_k^{(r,\vartheta)} ) =0$.
The second equality in \eqref{eq-PQ-action} can be proved using exactly the same method,
and here we omit the details.

\vspace{.2in}

{\em Acknowledgements}.
We  thank an anonymous referee for helpful suggestions.
We also thank Prof. Shuai Guo for fruitful discussions,
and thank Prof. Huijun Fan, Prof. Xiaobo Liu, and Prof. Jian Zhou for encouragement.

\end{document}